\newtheorem{theorem}{Theorem}[section]
\newtheorem{lemma}[theorem]{Lemma}
\newtheorem{proposition}[theorem]{Proposition}
\newtheorem{corollary}[theorem]{Corollary}
\newtheorem{conjecture}[theorem]{Conjecture}
\newtheorem{definition}{Definition}[section]
\newcommand{\supp}{\operatorname{supp}}
\def\ds{\displaystyle}
\def\lt{\left}
\def\rt{\right}
\begin{document}

\title{\textbf{On the maximum number of minimal codewords}}
\author{Romar dela Cruz\textsuperscript{1} and Sascha Kurz\textsuperscript{2}\\\\
\small{\textsuperscript{1}Institute of Mathematics, University of the Philippines Diliman, Philippines}\\
\small{\textsuperscript{2}Department of Mathematics, University of Bayreuth, Germany}}
\date{}

\maketitle

\begin{abstract}
  Minimal codewords have applications in decoding linear codes and in cryptography. We study the maximum number of minimal codewords in binary linear codes 
  of a given length and dimension. Improved lower and upper bounds on the maximum number are presented. We determine the exact values for the case of linear 
  codes of dimension $k$ and length $k+2$ and for small values of the length and dimension. We also give a formula for the number of minimal codewords of 
  linear codes of dimension $k$ and length $k+3$. 
\end{abstract}


\section{Introduction}
The minimal codewords of a linear code are those whose supports, i.e., the set of nonzero coordinates, do not properly contain the supports of other nonzero codewords.  
They are equivalent to circuits in matroids and cycles in graphs. In coding theory, minimal codewords were first used in decoding algorithms \cite{Agrell,Agrell2,AB,Hwang}. 
They have also found applications in cryptography: in secret sharing schemes \cite{Massey} and in secure two-party computation \cite{CCP}.

The set of minimal codewords is only known for a few classes of codes (see \cite{Agrell, ABN, AB, BM, DKL, DY, kurz2020number, SST, TQLZ,YF}) and, in general, it is a very hard problem to 
determine this set. In this work, we consider the following question: what is the maximum number of minimal codewords of linear codes of a given length and dimension? 
This problem is already studied in the case of cycles in graphs \cite{ES}. In the matroid setting, the maximum number of circuits was first addressed in \cite{DSL}. 
The study of the maximum and minimum number of minimal codewords of linear codes was initiated in \cite{Minmin,Maxmin2,Maxmin,CKKW}. 

The results in this paper are described as follows. We determine the maximum number of minimal codewords for binary linear codes of dimension $k$ and length $k+2$. We 
also give a formula for the number of minimal codewords for the case of dimension $k$ and length $k+3$. A general construction of linear codes with a relatively large 
number of minimal codewords is also presented. This gives a lower bound that is asymptotically close to the matroid upper bound. An upper bound that is better than the 
matroid upper bound is also derived. The key idea is to use the systematic generator matrix for a linear code and analyze the properties of the subsets of rows that 
produce minimal codewords. We also compute the exact values of the for maximum number of minimal codewords small values of length and dimension (completing the table 
in \cite{Maxmin2}).

\section{Preliminaries}

Let $q$ be a power of a prime $p$ and $\mathbb{F}_q$ be the finite field of order $q$. 
An $[n,k]_q$ linear code $C$ is a $k$-dimensional subspace of $\mathbb{F}_q^n$. Given a vector $x\in\mathbb{F}_q^n$, the support of $x$ is defined as 
$\supp(x)=\{i\,:\, x_i\neq 0, 1\leq i\leq n \}$. 
A $k\times n$ matrix $G$ whose rows form a basis for $C$ is called a generator matrix. If $G=[I_k|A]$, where $I_k$ is the $k\times k$ identity matrix, then we say 
that $G$ is systematic or in standard form.

A nonzero codeword $c\in C$ is minimal if there does not exist a nonzero codeword $c'$ such that $\supp(c')\subset\neq \supp(c)$. 
Otherwise (including the case $c=\mathbf{0}$), we call the codeword $c$ non-minimal. General properties of minimal codewords 
can be found in \cite{AB}. Note that a codeword and its nonzero scalar multiples have the same support.  We say that two codewords are equivalent if one is a scalar multiple 
of the other.  We use the notation $M(C)$ for the number of non-equivalent minimal codewords of $C$. Let $M_q(n,k)$ be the maximum of $M(C)$ for all $[n,k]_q$ codes $C$. 
Since $C$ has $q^k-1$ nonzero codewords, we have$$M_q(n,k)\leq \dfrac{q^k-1}{q-1}.$$

Bounds for $M_q(n,k)$ and some exact values can be found in \cite{Agrell2, Maxmin2, AB, Maxmin,CKKW,DSL}. In the setting of matroids, it was shown in \cite{DSL}, that 
\begin{equation}
  M_q(n,k)\leq {n\choose k-1}. \label{mub}
\end{equation}
This is bound is also called the matroid upper bound. Alternative proofs were given in \cite{Maxmin}. Inequality~(\ref{mub}) is satisfied with equality for MDS codes. 
Another upper bound was derived by Agrell in \cite{Agrell2} for binary codes with high rate: for $\frac{k-1}{n}>\frac{1}{2}$, we have 
$$M_2(n,k)\leq \frac{2^k}{4n\lt(\dfrac{k-1}{n}-\dfrac{1}{2}\rt)}.$$

Based on random coding, the lower bound
$$
  M_q(n,k)\geq \ds\sum_{j=0}^{n-k+1} {n\choose j}\dfrac{(q-1)^j}{q^{n-k}}\prod_{i=0}^{j-2} \lt[1-q^{-(n-k-i)}\rt]
$$ 
was given in \cite{AB}.

It is clear that we have $M_q(n,1)=1$ and $M_q(k,k)=k$ for all $k\geq 1$. In \cite{Maxmin2}, it was shown that $M_2(k+1,k)={k+1\choose 2}$ for $k\geq 2$. For small 
values of $k$ and $n$, the authors in \cite{Maxmin2} presented some exact values and bounds on $M_2(n,k)$. In addition, exact values for the case of cycle codes were obtained. 

\section{Relations between minimal codewords and the rows of a systematic generator matrix}
Let $C$ be a linear $[k+t,k]_2$, i.e.\ binary, code with systematic generator matrix $G$. By $g^i$ we denote the $i$th row of $G$, where $1\le i\le k$. For each subset 
$S\subseteq\{1,\dots,k\}$ let $c^S$ denote the sum of the rows of $G$ with indices in $S$, i.e., $c^S=\sum_{i\in S}g^i\in C$. For each codeword $c\in C$ let $c_S\in\mathbb{F}_2^k$ 
denote the systematic part of $c$, i.e., the restriction of $c$ to the first $k$ coordinates $c_1,\dots, c_k$. Similarly, for each codeword $c\in C$ let $c_I\in\mathbb{F}_2^t$ 
denote the information bits, i.e., the restriction of $c$ to the last $t$ coordinates $c_{k+1},\dots,c_{k+t}$. Some of the subsequent observations can also be found in \cite{kurz2020number}. 

\begin{lemma}
  \label{lemma_zero_sum}
  Let $\emptyset\neq S\subseteq\{1,\dots,k\}$. If there exists a subset $\emptyset\neq T\subsetneq S$ with $c^T_I=\mathbf{0}$, then $c^S$ is non-minimal.
\end{lemma}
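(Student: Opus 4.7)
The plan is to exploit the systematic form of $G$ to pin down the supports of $c^T$ and $c^S$ exactly and then exhibit $c^T$ as a nonzero codeword whose support sits strictly inside $\supp(c^S)$.

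First I would observe that since $G=[I_k\mid A]$, the systematic part of $c^S=\sum_{i\in S}g^i$ is precisely the indicator vector $\mathbf{1}_S\in\mathbb{F}_2^k$, so the support of $c^S$ restricted to the first $k$ coordinates equals $S$. The same reasoning gives that the support of $c^T$ restricted to the first $k$ coordinates equals $T$. This uses only the definition of $c^S$, the systematic form of $G$, and the fact that $S$ and $T$ are nonempty (so that these restricted supports are nonempty).

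Next I would invoke the hypothesis $c^T_I=\mathbf{0}$: the information bits of $c^T$ vanish, so $\supp(c^T)\subseteq\{1,\dots,k\}$, which combined with the previous step yields $\supp(c^T)=T$. On the other hand, the support of $c^S$ contains the systematic coordinates indexed by $S$ (and possibly some extra coordinates among $k+1,\dots,k+t$), so $\supp(c^S)\supseteq S$. Since we are given $T\subsetneq S$, we conclude
\[
\supp(c^T)\;=\;T\;\subsetneq\;S\;\subseteq\;\supp(c^S).
\]

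Finally, because $T\neq\emptyset$ the codeword $c^T$ is nonzero (its systematic part $\mathbf{1}_T$ is nonzero), so the above strict containment witnesses the non-minimality of $c^S$ directly from the definition. I do not anticipate any real obstacle here; the whole argument is essentially an unpacking of the systematic form plus the given hypothesis, and the only thing to be careful about is keeping track of why each inclusion is strict (which comes from $T\subsetneq S$) and why $c^T$ is genuinely nonzero (which comes from $T\neq\emptyset$).
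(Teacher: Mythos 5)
Your argument is correct. It rests on the same mechanism as the paper's proof --- unpacking the systematic form $G=[I_k\mid A]$ so that the first $k$ coordinates of $c^S$ are exactly the indicator vector of $S$ --- but you choose a different witness codeword. The paper exhibits $c^{S\backslash T}$: since $c^T_I=\mathbf{0}$, the information parts of $c^{S\backslash T}$ and $c^S$ coincide while the systematic part shrinks strictly, giving $\supp\!\left(c^{S\backslash T}\right)\subsetneq\supp\!\left(c^S\right)$. You instead exhibit $c^T$ itself, observing that the hypothesis $c^T_I=\mathbf{0}$ pins down $\supp\!\left(c^T\right)=T$ exactly, so that $\supp\!\left(c^T\right)=T\subsetneq S\subseteq\supp\!\left(c^S\right)$. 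Both witnesses are nonzero for the reason you give ($T\neq\emptyset$, respectively $S\backslash T\neq\emptyset$), and both verifications are immediate; the paper's choice of $c^{S\backslash T}$ is perhaps preferred there because the same complementation trick reappears in the proofs of the surrounding lemmas, but your version is, if anything, slightly more direct for this statement in isolation.
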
 
\begin{proof}
  Since $\supp\!\left(c^{S\backslash T}_I\right)=\supp\!\left(c^{S}_I\right)$ and $\supp\!\left(c^{S\backslash T}_S\right)\subsetneq \supp\!\left(c^{S}_S\right)$, we have 
  $\supp\!\left(c^{S\backslash T}\right)\subsetneq \supp\!\left(c^{S}\right)$.
\end{proof}

\begin{lemma}
  Let $\emptyset\neq S\subseteq\{1,\dots,k\}$. The codeword $c^S$ is non-minimal iff there exists a subset $\emptyset\neq T\subsetneq S$ with $\supp(c^T_I)\subseteq \supp(c^S_I)$.
\end{lemma}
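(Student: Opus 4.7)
I would prove both directions separately, using the basic fact that $U\mapsto c^U$ is a bijection between subsets of $\{1,\dots,k\}$ and codewords of $C$: since the systematic part of $c^U$ is exactly the indicator vector of $U$, distinct subsets yield distinct codewords and every codeword arises this way.

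For the $(\Leftarrow)$ direction, I would mimic Lemma~\ref{lemma_zero_sum}, weakening its hypothesis $c^T_I=\mathbf{0}$ to $\supp(c^T_I)\subseteq\supp(c^S_I)$. Given such a $T$, consider $c^{S\setminus T}=c^S+c^T$ (the two coincide in characteristic $2$ because $T\subseteq S$). Its systematic support is $S\setminus T$, which is a proper subset of $S$ since $T$ is nonempty, and its information part $c^S_I+c^T_I$ has support contained in $\supp(c^S_I)\cup\supp(c^T_I)=\supp(c^S_I)$, the final equality being exactly the hypothesis. Combining the two observations gives $\supp(c^{S\setminus T})\subsetneq\supp(c^S)$, so $c^S$ is non-minimal.

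For the $(\Rightarrow)$ direction, I would assume $c^S$ is non-minimal and take any nonzero codeword $c'$ with $\supp(c')\subsetneq\supp(c^S)$. Writing $c'=c^T$ via the bijection above forces $T\neq\emptyset$ since $c'\neq\mathbf{0}$. Restricting the support inclusion $\supp(c^T)\subseteq\supp(c^S)$ to the systematic coordinates gives $T\subseteq S$, and $T=S$ is ruled out because it would force $c'=c^S$, contradicting the strict inclusion. Restricting the same inclusion to the information coordinates yields $\supp(c^T_I)\subseteq\supp(c^S_I)$, as required.

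The main obstacle, if any, is conceptual rather than technical: one has to see that in the $(\Leftarrow)$ direction the correct codeword to test is $c^{S\setminus T}=c^S+c^T$ and that the hypothesis on information supports is exactly what prevents the sum from introducing new nonzero positions outside $\supp(c^S_I)$. Everything else is routine bookkeeping on supports.
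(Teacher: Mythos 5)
Your proof is correct and follows essentially the same support-bookkeeping route as the paper, with an identical $(\Rightarrow)$ direction. The only difference is in the $(\Leftarrow)$ direction: the paper splits into the cases $\supp(c^T_I)\subsetneq\supp(c^S_I)$ (where $c^T$ itself witnesses non-minimality) and $\supp(c^T_I)=\supp(c^S_I)$ (where it invokes Lemma~\ref{lemma_zero_sum}), whereas you use the single witness $c^{S\setminus T}$ throughout --- a slight, harmless unification, noting only that $c^{S\setminus T}\neq\mathbf{0}$ should be recorded explicitly (it follows since $S\setminus T\neq\emptyset$).
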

\begin{proof}
  Since $S\neq \emptyset$ we have $c^S\neq \mathbf{0}$. Thus, if $c^S$ is non-minimal, there exists a subset $\emptyset\neq T\subsetneq S$ with $\supp(c^T)\subsetneq \supp(c^S)$, 
  so that $\supp(c^T_I)\subseteq \supp(c^S_I)$. For the other direction let $\emptyset\neq T\subsetneq S$ with $\supp(c^T_I)\subseteq \supp(c^S_I)$. If $\supp(c^T_I)\neq \supp(c^S_I)$, 
  then $\supp(c^T_I)\subsetneq \supp(c^S)$ implies $\supp(c^T)\subsetneq \supp(c^S_I)$ so that $c^S$ is non-minimal by definition. If $\supp(c^T_I)= \supp(c^S_I)$, then 
  $c^{S\backslash T}_I=\mathbf{0}$ and we can apply Lemma~\ref{lemma_zero_sum}.
\end{proof}  

\begin{corollary}
  \label{cor_max_card_S}
  Let $c^S$ be a minimal codeword. Then, we have $1\le \#S\le t+1$. Moreover, if $\# S=t+1$, then $c^S_I=\mathbf{0}$.   
\end{corollary}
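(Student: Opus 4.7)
The plan is to view $T \mapsto c^T_I$ as an $\mathbb{F}_2$-linear map from the power set of $S$ (identified with $\mathbb{F}_2^S$ via characteristic vectors) to $\mathbb{F}_2^t$, and then combine rank-nullity with Lemma~\ref{lemma_zero_sum}. The lower bound $\#S \ge 1$ is immediate since $c^S \neq \mathbf{0}$ requires $S \neq \emptyset$.

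First, I would set up the linear map $\varphi \colon \mathbb{F}_2^S \to \mathbb{F}_2^t$, $T \mapsto c^T_I$; this is linear because $c^{T_1 \triangle T_2}_I = c^{T_1}_I + c^{T_2}_I$ over $\mathbb{F}_2$. By rank-nullity, $\dim \ker \varphi \ge \#S - t$, so $\ker \varphi$ contains at least $2^{\#S - t}$ elements.

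For the upper bound $\#S \le t+1$, suppose for contradiction that $\#S \ge t+2$. Then $\ker \varphi$ has at least $2^{\#S - t} \ge 4$ elements, hence at least three nonempty subsets $T \subseteq S$ with $c^T_I = \mathbf{0}$. At most one of these can equal $S$ itself, so there exists a nonempty proper subset $T \subsetneq S$ with $c^T_I = \mathbf{0}$, contradicting minimality of $c^S$ via Lemma~\ref{lemma_zero_sum}.

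For the moreover part, suppose $\#S = t+1$ and, for contradiction, $c^S_I \neq \mathbf{0}$, so $S \notin \ker \varphi$. Since $\dim \ker \varphi \ge 1$, there exists a nonempty $T \in \ker \varphi$, and since $S \notin \ker \varphi$ we must have $T \neq S$, i.e.\ $\emptyset \neq T \subsetneq S$ with $c^T_I = \mathbf{0}$. Again Lemma~\ref{lemma_zero_sum} forces $c^S$ to be non-minimal, contradicting the hypothesis. Thus $c^S_I = \mathbf{0}$.

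The only subtlety I anticipate is making sure to rule out the degenerate case where every witness $T$ from the kernel happens to coincide with $S$ itself; this is handled cleanly in both parts because in the first part the kernel is large enough to force multiple nonempty elements, and in the second part the assumption $c^S_I \neq \mathbf{0}$ directly excludes $S$ from the kernel. No serious obstacle is expected — the argument is essentially a pigeonhole/dimension count packaged with the preceding lemma.
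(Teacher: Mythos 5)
Your proof is correct and follows essentially the same route as the paper: both arguments reduce to the fact that more than $t$ vectors in $\mathbb{F}_2^t$ admit a nonempty zero-sum subset $T$, and then invoke Lemma~\ref{lemma_zero_sum} to rule out $\emptyset\neq T\subsetneq S$. Your rank--nullity packaging merely makes explicit the edge case $T=S$ that the paper's terser proof handles implicitly (if the only dependency is $T=S$, then $\#S=\#T\le t+1$ and $c^S_I=\mathbf{0}$), so there is nothing to add.
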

\begin{proof}
  The largest cardinality of a set of linearly independent vectors in $\mathbb{F}_2^t$ is $t$. Thus, if $\#S\ge t+1$, then there exists a subset $T\subseteq S$ with 
  $c^T_I=\mathbf{0}$ and $\#T\le t+1$. We finally apply Lemma~\ref{lemma_zero_sum} to conclude $\#S\le t+1$.  
\end{proof}

As a direct consequence we conclude
$$
  M_2(k+t,k)\le \sum_{i=1}^{t+1} {{k+t}\choose i},
$$
which asymptotically tends to ${{k+t}\choose{t+1}}$ for a fixed value of $t$ (if $k$ tends to infinity). In Proposition~\ref{prop_first_improved_upper_bound} 
we will present a strict improvement over the matroid upper bound ${n\choose{k-1}}={{k+t}\choose{t+1}}$, see (\ref{mub}), provided that $k$ is large enough. 

\begin{lemma}
  \label{lemma_zero_sum_characterization}
  Let $\emptyset\neq S\subseteq\{1,\dots,k\}$ be a subset such that $c^S_I=\mathbf{0}$. Then, $c^S$ is minimal iff $c^T_I\neq\mathbf{0}$ for all $\emptyset\neq T\subsetneq S$.
\end{lemma}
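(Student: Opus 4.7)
The plan is to verify both implications using the key observation that, since $G=[I_k|A]$, the systematic part of $c^S$ equals the indicator vector of $S$, so $\supp(c^S_S)=S$; combined with the hypothesis $c^S_I=\mathbf{0}$, this yields $\supp(c^S)=S\subseteq\{1,\dots,k\}$.

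For the ($\Rightarrow$) direction, I would simply invoke the contrapositive of Lemma~\ref{lemma_zero_sum}: if there existed some $\emptyset\neq T\subsetneq S$ with $c^T_I=\mathbf{0}$, then $c^S$ would be non-minimal, contradicting minimality. This is essentially a one-line argument.

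For the ($\Leftarrow$) direction, suppose for contradiction that $c^S$ is non-minimal. Then there exists a nonzero codeword $c'\in C$ with $\supp(c')\subsetneq \supp(c^S)=S$. Since every codeword can be uniquely written as $c^T$ for some $T\subseteq\{1,\dots,k\}$ (the rows of $G$ being a basis), I would set $T=\supp(c'_S)$ so that $c'=c^T$. The containment $\supp(c^T)=\supp(c')\subsetneq S\subseteq\{1,\dots,k\}$ forces $T\subsetneq S$ and also $c^T_I=\mathbf{0}$, while $c'\neq\mathbf{0}$ gives $T\neq\emptyset$. This contradicts the hypothesis $c^T_I\neq\mathbf{0}$ for all $\emptyset\neq T\subsetneq S$, completing the proof.

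There is no real obstacle here: the statement is essentially a converse packaging of Lemma~\ref{lemma_zero_sum} under the extra assumption $c^S_I=\mathbf{0}$, and the only thing to be careful about is correctly identifying $\supp(c^S)$ with $S$ so that any strictly smaller-support codeword is automatically a sum $c^T$ with $T\subsetneq S$ and vanishing information bits.
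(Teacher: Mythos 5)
Your proposal is correct and follows essentially the same route as the paper: the forward direction is the contrapositive of Lemma~\ref{lemma_zero_sum}, and the reverse direction identifies a smaller-support codeword with some $c^T$, $\emptyset\neq T\subsetneq S$, having $c^T_I=\mathbf{0}$. You merely spell out more explicitly (via $\supp(c^S)=S$ and $T=\supp(c'_S)$) a step the paper leaves implicit.
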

\begin{proof}
  Since $S\neq \emptyset$ we have $c^S\neq \mathbf{0}$. If $c^S$ is non-minimal, then there exists a subset $\emptyset\neq T\subsetneq S$ with $\supp(c^T)\subsetneq\supp(c^S)$. 
  Since $c^S_I=\mathbf{0}$ this implies $c^T_I=\mathbf{0}$. For the other direction we apply Lemma~\ref{lemma_zero_sum}. 
\end{proof}

\begin{lemma}
  Let $G$ be a systematic generator matrix of an $[k+t,k]_2$ code $C$ and $1\le i\le k$ be an index with $g^i_I=\mathbf{0}$. By $G'$ we denote the matrix that arises from 
  $G$ by removing the $i$th row $g^i$ and by $G''$ the matrix if we additionally remove the $i$th column. Let $C'$ and $C''$ be the linear codes generated by $G'$ and 
  $G''$, respectively. Then $C'$ is $[k+t,k-1]_2$ code, $C''$ a $[k+t-1,k-1]_2$ codes, and we have $M(C)=M(C')+1=M(C'')+1$.  
\end{lemma}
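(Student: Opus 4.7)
The plan is to use the fact that the hypothesis $g^i_I=\mathbf{0}$, together with the systematic form of $G$, forces $g^i$ to be exactly the standard basis vector $e_i\in\mathbb{F}_2^{k+t}$. So $g^i$ is itself a codeword of $C$ with support $\{i\}$, trivially minimal. This single minimal codeword will account for the $+1$ in the claimed formula.

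Next I would set up the decomposition of $C$ using $C'$. Because the $i$th column of $G$ contains a $1$ only in row $i$, deleting row $g^i$ leaves the $i$th column of $G'$ identically zero. Hence every codeword of $C'$ has $0$ in position $i$, and in particular $g^i\notin C'$. Since $g^i\in C$ but not in $C'$, we get $C=C'\sqcup(g^i+C')$, a disjoint union of two cosets. This immediately gives the asserted parameters: the $k-1$ rows of $G'$ are linearly independent (as a subset of a basis of $C$), so $C'$ is $[k+t,k-1]_2$; deleting the all-zero $i$th column yields $C''$ with the same dimension and length shortened by one.

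The heart of the argument is comparing minimal codewords across the two cosets. For $c'\in C'\setminus\{\mathbf{0}\}$ the codeword $c'+g^i$ satisfies $\supp(c'+g^i)=\supp(c')\cup\{i\}\supsetneq\{i\}=\supp(g^i)$, using that $c'_i=0$. Hence every nonzero element of the coset $g^i+C'$, except $g^i$ itself, is non-minimal in $C$. For codewords in $C'$ I claim minimality in $C'$ is equivalent to minimality in $C$: one direction is immediate since $C'\subseteq C$; for the converse, suppose $d\in C$ with $\emptyset\neq\supp(d)\subsetneq\supp(c')$, and write $d=d'+\varepsilon g^i$ with $d'\in C'$, $\varepsilon\in\{0,1\}$. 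If $\varepsilon=1$ then $i\in\supp(d)$ while $i\notin\supp(c')$, a contradiction; so $\varepsilon=0$ and $d=d'\in C'$ witnesses non-minimality of $c'$ in $C'$. Combining with the observation about $g^i$, we conclude $M(C)=M(C')+1$.

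Finally, the map $C'\to C''$ obtained by deleting coordinate $i$ is a linear bijection that preserves supports bijectively (since coordinate $i$ is always zero on $C'$), and thus preserves the containment order of supports. Therefore it induces a bijection between minimal codewords of $C'$ and minimal codewords of $C''$, yielding $M(C')=M(C'')$ and hence $M(C)=M(C'')+1$. No step is truly difficult; the only place requiring care is the equivalence of minimality in $C$ versus $C'$, where one must use the coset decomposition to rule out the case $\varepsilon=1$.
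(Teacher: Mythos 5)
Your proof is correct and follows essentially the same route as the paper: isolate $g^i=e_i$ as the one extra minimal codeword, show every other codeword in the coset $g^i+C'$ is non-minimal because its support strictly contains $\{i\}=\supp(g^i)$, and show minimality in $C$ and in $C'$ agree on $C'$ by observing that any witness of non-minimality must avoid coordinate $i$. The only cosmetic difference is that you phrase the case split via the coset decomposition $C=C'\sqcup(g^i+C')$ and argue directly with supports, where the paper indexes codewords by subsets $S\subseteq\{1,\dots,k\}$ and cites its Lemma~\ref{lemma_zero_sum}.
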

\begin{proof}
  The stated lengths and the dimensions of the codes $C'$ and $C''$ directly follow from their construction. Since removing a zero column in a generator matrix does not 
  change the number of minimal codewords, we have $M(C')=M(C'')$, so that it remains to show $M(C)=M(C')+1$. The codeword $g^i$ itself is minimal in $C$ and not contained 
  in $C'$. For any subset $\{i\}\subsetneq S\subseteq\{1,\dots,k\}$ the codeword $c^S$ is non-minimal due to Lemma~\ref{lemma_zero_sum} (choosing $T=\{i\}$). 
  It remains to show that for subsets $\emptyset\neq S\subseteq\{1,\dots,k\}\backslash\{i\}$ the codeword $c^S\in C'\le C$ is minimal in $C'$ iff it is minimal in $C$. Since 
  $C'$ is a subcode of $C$ we only need to consider the case where $c^S$ is non-minimal in $C$. Then, there exists a subset $\emptyset\neq T\subsetneq S$ with $\supp(c^T)\subsetneq 
  \supp(c^S)$. Since $g^i_I=\mathbf{0}$ we can assume $i\notin T$, so that $c^T\in C'$ and $c^S$ is also non-minimal in $C'$. 
\end{proof}

So, in the following we may assume $c^S_I\neq\mathbf{0}$ whenever needed and we mention the implication $M_2(k,k)=k$ for all $k\ge 1$. 

\begin{definition}
  Let $C$ and especially $t$ be given. By $\mathcal{T}$ we denote the set of the $2^t$ elements of $\mathbb{F}_2^t$. For each 
  $\tau\in\mathcal{T}$ we set $a_\tau=\#\left\{1\le i\le k\,:\, r^i_I=\tau\right\}$. The counting vector of all $a_\tau$ is denoted by 
  $\mathbf{a}$. More precisely, we write $a_\tau(C)$ and $\mathbf{a}(C)$ whenever the code $C$ is not clear from the context.
\end{definition}  

Since column and row permutations of a generator matrix do not change the number of minimal codewords, we have:
\begin{lemma}
  Let $C$ and $C'$ be two $[k+t,k]_2$ codes. If $\mathbf{a}(C)=\mathbf{a}(C')$, then $M(C)=M(C')$.
\end{lemma}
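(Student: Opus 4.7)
The plan is to unpack the definition of $\mathbf{a}(C)$ and then invoke the invariance of $M(C)$ under row and column permutations of the generator matrix, which is exactly the fact cited just before the lemma. First, I observe that $\mathbf{a}(C)$ is essentially the multiset of rows of the ``information block'' $A$ in the systematic generator matrix $G=[I_k\mid A]$ of $C$: the entry $a_\tau$ simply records how often $\tau\in\mathbb{F}_2^t$ appears as a row of $A$. Hence $\mathbf{a}(C)=\mathbf{a}(C')$ means precisely that the information blocks $A$ and $A'$ of $G=[I_k\mid A]$ and $G'=[I_k\mid A']$ have the same multiset of rows, so there is a permutation $\pi$ of $\{1,\dots,k\}$, with associated permutation matrix $P$, such that $A'=PA$.

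Next, I would show that $G'$ arises from $G$ by row and column permutations alone. Left-multiplying $G$ by $P$ yields $PG=[P\mid PA]$, which still generates $C$ because row operations do not change the row span. Then permuting the first $k$ columns of $[P\mid PA]$ by $\pi^{-1}$ restores the left block to the identity and produces $[I_k\mid PA]=[I_k\mid A']=G'$. Thus $G'$ is obtained from $G$ by a row permutation followed by a permutation of the first $k$ columns, which corresponds to a coordinate permutation of the code on its first $k$ positions. Since such a permutation preserves supports up to relabeling, it sends minimal codewords to minimal codewords and preserves the scalar-multiple equivalence relation, so $M(C)=M(C')$.

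The only step requiring a bit of care is the bookkeeping needed to pass from ``the multisets of rows of $A$ and $A'$ agree'' to ``$A'=PA$ for some permutation matrix $P$,'' but this is elementary. Beyond that, the statement is essentially a formalization of the remark preceding the lemma, and no additional idea is needed.
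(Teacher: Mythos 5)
Your argument is correct and is exactly the approach the paper takes: the paper offers no proof beyond the one-sentence remark that row and column permutations of a generator matrix preserve the number of minimal codewords, and your proposal is simply a careful formalization of that remark. The bookkeeping step (equal row multisets of $A$ and $A'$ give $A'=PA$, hence $G'$ arises from $G$ by a row permutation followed by a permutation of the first $k$ columns) is handled correctly.
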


For the case $t=1$ we can easily determine $M(C)$ given the vector $\mathbf{a}(C)=(a_0,a_1)$.
\begin{lemma}
  Let $C$ be a $[k+1,k]_2$ code. Then, $M(C)=k+{a_1\choose 2}$.
\end{lemma}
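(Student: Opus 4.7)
The plan is to partition the minimal codewords of $C$ according to the cardinality of the set $S\subseteq\{1,\dots,k\}$ for which $c^S$ equals the codeword; this representation is unique because $g^1,\dots,g^k$ form a basis, and over $\mathbb{F}_2$ each minimal codeword constitutes its own equivalence class. Since $t=1$, Corollary~\ref{cor_max_card_S} restricts attention to $1\le \#S\le 2$, and moreover forces $c^S_I=\mathbf{0}$ whenever $\#S=2$.

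For $\#S=1$, writing $S=\{i\}$, I would show directly that the row $g^i$ is minimal. Any codeword whose support is strictly inside $\supp(g^i)$ has its systematic part supported in $\{i\}$ and therefore equals $\mathbf{0}$ or $g^i$ itself, neither of which yields a proper subset. This contributes $k$ minimal codewords, one for each index.

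For $\#S=2$, write $S=\{i,j\}$. The condition $c^S_I=\mathbf{0}$ reads $g^i_I+g^j_I=\mathbf{0}$, i.e., $g^i_I=g^j_I$ (an equality of single bits, since $t=1$). Given this, Lemma~\ref{lemma_zero_sum_characterization} says that $c^S$ is minimal iff $c^T_I\neq\mathbf{0}$ for every $\emptyset\neq T\subsetneq S$; the only such $T$'s are $\{i\}$ and $\{j\}$, so the requirement becomes $g^i_I=g^j_I=1$. The minimal codewords of this type are therefore in bijection with unordered pairs drawn from the $a_1$ rows whose information bit equals $1$, contributing $\binom{a_1}{2}$. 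Summing the two contributions yields $M(C)=k+\binom{a_1}{2}$.

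No individual step is difficult: once one unpacks what $c^S_I=\mathbf{0}$ means when $t=1$, the earlier lemmas do essentially all the work. The only place requiring a small amount of care is the base case $\#S=1$, since the preceding lemmas phrase their criteria on the non-minimality side and do not explicitly assert that each generator row is minimal.
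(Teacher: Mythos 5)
Your proposal is correct and follows essentially the same route as the paper: restrict to $\#S\le 2$ via Corollary~\ref{cor_max_card_S}, count the $k$ singletons, and characterize the minimal $c^{\{i,j\}}$ via $c^S_I=\mathbf{0}$ together with Lemma~\ref{lemma_zero_sum_characterization} to get $\binom{a_1}{2}$. Your explicit verification that each row $g^i$ is minimal is a detail the paper merely asserts, but it does not change the argument.
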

\begin{proof}
  For all subsets $S\subseteq\{1,\dots,k\}$ of cardinality $1$, the codeword $c^S$ is minimal, which give $k$ minimal codewords. Due to Corollary~\ref{cor_max_card_S} 
  is suffices to consider codewords of the form $c^S$ with $\emptyset\subseteq S\subseteq \{1,\dots,k\}$ and $\# S\le 2$, so that it remains to consider the cases with 
  $\#S=2$. Due to Lemma~\ref{lemma_zero_sum}, Corollary~\ref{cor_max_card_S}, and Lemma~\ref{lemma_zero_sum_characterization} the codeword $c^{\{i,j\}}$ is minimal iff 
  $i\neq j$ and $g^i_I=g^j_I=1$. 
\end{proof}

\begin{corollary}
  $M_2(k+1,k)={{k+1}\choose 2}=(k+1)k/2$.
\end{corollary}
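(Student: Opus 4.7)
The plan is to read off the result directly from the preceding lemma, which states $M(C)=k+\binom{a_1}{2}$ for any $[k+1,k]_2$ code $C$. First I would observe that since $t=1$, the set $\mathcal{T}$ has only two elements, $0$ and $1$, so the counting vector $\mathbf{a}(C)=(a_0,a_1)$ satisfies $a_0+a_1=k$ and in particular $a_1\le k$. Therefore $\binom{a_1}{2}\le\binom{k}{2}$, and the lemma yields the upper bound
$$
  M(C)\le k+\binom{k}{2}=\frac{2k+k(k-1)}{2}=\binom{k+1}{2}.
$$

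Second, to see that this upper bound is tight, I would exhibit a code achieving $a_1=k$. Take the systematic generator matrix $G=[I_k\mid A]$ where $A$ is the single column $(1,1,\dots,1)^\top\in\mathbb{F}_2^k$; then every row satisfies $g^i_I=1$, so $a_1(C)=k$ and the preceding lemma gives $M(C)=k+\binom{k}{2}=\binom{k+1}{2}$.

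There is no real obstacle here: the content is already in the preceding lemma, and the only thing to check is the trivial combinatorial maximization $\binom{a_1}{2}\le\binom{k}{2}$ subject to $0\le a_1\le k$, together with the existence of an extremal code, which is immediate from the all-ones parity-check construction.
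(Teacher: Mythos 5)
Your proof is correct and is exactly the argument the paper intends: the corollary is stated without proof as an immediate consequence of the preceding lemma, namely maximizing $\binom{a_1}{2}$ over $0\le a_1\le k$ and exhibiting the all-ones-column code (the projective base the paper mentions right after the corollary) to attain $a_1=k$. Nothing is missing.
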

The same result was also obtained in \cite{Maxmin2}. Not that the matroid upper bound $M_2(n,k)\leq {n\choose k-1}={{k+t}\choose{k-1}}={{k+t}\choose{t+1}}$, see (\ref{mub}), 
is matched with equality.  We remark that the unique code attaining this upper bound is the so-called projective base (of $\mathbb{F}_2^k$) given by a generator matrix 
consisting of the $k$ unit vectors and the all-$1$-vector as columns.

In Lemma~\ref{lemma_zero_sum_characterization} we have characterized whether $c^S$ is minimal for the special case when $c^S_I=\mathbf{0}$ using the information bits of $g^i$, 
where $i\in S$, only. This can be generalized and formalized as follows.

\begin{definition}
  \label{def_reduced_code}
  Let $C$ be a $[k+t,k]_2$ code and $\emptyset\neq S\subseteq\{1,\dots,k\}$ a subset. With this, we set
  $$
    \overline{C^S}:=\left\langle\left\{g^i_I\,:\,i\in S\right\}\right\rangle.
  $$
  We call $\overline{C^S}$ the reduced code of $C$ with respect to $S$.
\end{definition}  
 
\begin{lemma}
  \label{lemma_reduced_code}
  Let $C$ be a $[k+t,k]_2$ code and $\emptyset\neq S\subseteq\{1,\dots,k\}$ a subset. The codeword $c^S$ is minimal in $C$ iff $c^S_I$ is either minimal in 
  $\overline{C^S}$ or $c^S_I=\mathbf{0}$ and $c^T_I\neq \mathbf{0}$ for all $\emptyset\neq T\subsetneq S$.
\end{lemma}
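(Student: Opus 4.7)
The plan is to split on whether $c^S_I=\mathbf{0}$ and argue each implication separately, leaning on Lemma~\ref{lemma_zero_sum}, Lemma~\ref{lemma_zero_sum_characterization}, and the characterization of non-minimality stated just below Lemma~\ref{lemma_zero_sum}.

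In the case $c^S_I=\mathbf{0}$, minimal codewords of $\overline{C^S}$ are by convention nonzero, so the first alternative on the right-hand side is vacuous and the claim collapses exactly to Lemma~\ref{lemma_zero_sum_characterization}; no further work is needed in this branch.

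In the case $c^S_I\neq\mathbf{0}$, the task reduces to showing that $c^S$ is minimal in $C$ iff $c^S_I$ is minimal in $\overline{C^S}$ (together with the accompanying condition that $c^{T'}_I\neq\mathbf{0}$ for every $\emptyset\neq T'\subsetneq S$). For the forward direction I would first observe that Lemma~\ref{lemma_zero_sum} already forces $c^{T'}_I\neq\mathbf{0}$ on every nonempty proper subset $T'\subsetneq S$ as soon as $c^S$ is minimal, and then argue by contraposition for minimality in $\overline{C^S}$: any nonzero $x=c^U_I\in\overline{C^S}$ with $\supp(x)\subsetneq\supp(c^S_I)$ must have $U\subsetneq S$ (otherwise the supports would coincide), whence $c^U\in C$ satisfies $\supp(c^U)=U\cup\supp(c^U_I)\subsetneq S\cup\supp(c^S_I)=\supp(c^S)$, contradicting minimality of $c^S$.

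For the reverse direction I would assume the right-hand side and suppose $c^S$ non-minimal; the non-minimality characterization provides $\emptyset\neq T\subsetneq S$ with $\supp(c^T_I)\subseteq\supp(c^S_I)$, and the non-vanishing hypothesis makes $c^T_I$ a nonzero element of $\overline{C^S}$. Strict support containment $\supp(c^T_I)\subsetneq\supp(c^S_I)$ contradicts minimality of $c^S_I$ in $\overline{C^S}$ directly. The main obstacle will be the equal-support subcase: over $\mathbb{F}_2$, $\supp(c^T_I)=\supp(c^S_I)$ forces $c^T_I=c^S_I$ and hence $c^{S\setminus T}_I=\mathbf{0}$ with $\emptyset\neq S\setminus T\subsetneq S$, which contradicts the non-vanishing hypothesis on proper subsets. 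Once this equal-support branch is closed by invoking that hypothesis (essentially the linear independence of $\{g^i_I : i\in S\}$), the reverse direction, and hence the lemma, is complete.
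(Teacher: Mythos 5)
Your plan is sound and, at its core, runs along the same lines as the paper's proof: both arguments pass to a witness $\emptyset\neq T\subsetneq S$, compare $\supp\!\left(c^T_I\right)$ with $\supp\!\left(c^S_I\right)$, and dispose of the equal-support subcase via $c^{S\backslash T}_I=\mathbf{0}$. The one substantive difference is that you have quietly restated the lemma: you carry the condition ``$c^T_I\neq\mathbf{0}$ for all $\emptyset\neq T\subsetneq S$'' into the branch $c^S_I\neq\mathbf{0}$ as well, whereas the printed right-hand side attaches it only to the branch $c^S_I=\mathbf{0}$. Your strengthening is not merely ``accompanying''\,---\,it is necessary, because under the literal reading the equivalence fails. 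For instance, for $t=3$ take $S$ of cardinality $4$ whose rows have information parts $100$, $010$, $110$, $001$: then $c^S_I=001$ has weight one and is therefore minimal in $\overline{C^S}=\mathbb{F}_2^3$, so the printed right-hand side holds, yet $c^S$ is non-minimal because the first three rows form a proper subset $T$ with $c^T_I=\mathbf{0}$ (Lemma~\ref{lemma_zero_sum}). The paper's own proof glosses over exactly this point: in the direction ``$c^S$ non-minimal $\Rightarrow$ right-hand side fails'' it concludes from $\supp\!\left(c^T_I\right)\subsetneq\supp\!\left(c^S_I\right)$ that $c^S_I$ is non-minimal in $\overline{C^S}$, which is unjustified when $c^T_I=\mathbf{0}$, since the zero vector is not an admissible witness for non-minimality; your extra hypothesis is precisely what guarantees a nonzero witness and is also what closes the equal-support subcase in your reverse direction. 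The later applications (Theorem~\ref{thm_M_C_a_formula}, and in particular the restriction to projective bases when $\#\hat{S}=t+1$) implicitly use your version, not the literal one. So: your proof is correct and follows essentially the paper's route, but you should state explicitly that you are proving a repaired form of the lemma, since the statement as printed is false and the extra non-vanishing condition on all proper subsets is doing real work in both branches.
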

\begin{proof}
  Assume that $c^S$ is non-minimal. Since $S\neq \emptyset$ we have $c^S\neq\mathbf{0}$, so that there exists a subset $\emptyset\neq T\subsetneq S$ with 
  $\supp\!\left(c^T\right)\subsetneq \supp\!\left(c^S\right)$. Thus, we have $\supp\!\left(c^T_I\right)\subseteq \supp\!\left(c^S_I\right)$. If 
  $\supp\!\left(c^T_I\right)\neq  \supp\!\left(c^S_I\right)$, then $\supp\!\left(c^T_I\right)\subsetneq \supp\!\left(c^S_I\right)$ and $c^S_I$ is non-minimal in 
  $\overline{C^S}$. If $\supp\!\left(c^T_I\right)= \supp\!\left(c^S_I\right)$, then $c^{S\backslash T}_I=\mathbf{0}$. So, either $c^S_I\neq\mathbf{0}$ and 
  $\mathbf{0}=c^{S\backslash T}_I\subsetneq c^S_I$ or $c^S_I=\mathbf{0}$ and $c^{S\backslash T}_I=\mathbf{0}$, where $\emptyset\neq S\backslash T\subsetneq S$.
  
  For the other direction we first assume that $c^S_I$ is non-minimal in $\overline{C^S}$ and $c^S_I\neq \mathbf{0}$. Here, there exists a subset $\emptyset\neq T
  \subsetneq S$ with $\supp\!\left(c^T_I\right)\subsetneq \supp\!\left(c^S_I\right)$, which implies $\supp\!\left(c^T\right)\subsetneq \supp\!\left(c^S\right)$, i.e., 
  $c^S$ is non-minimal in $C$. In the other case we assume $c^S_I=\mathbf{0}$ and the existence of a subset $\emptyset \neq T\subsetneq S$ with $c^T_I=\mathbf{0}$. 
  Here we have $\supp\!\left(c^T\right)\subsetneq \supp\!\left(c^S\right)$, i.e., $c^S$ is non-minimal.       
\end{proof}

\begin{definition}
  \label{def_minimal_generating}
  We call a subset $\hat{S}\subseteq\mathbb{F}_2^t$ minimal generating if $\sum_{x\in\hat{S}} x$ is minimal in $\langle\hat{S}\rangle$ or 
  $\sum_{x\in\hat{S}} x=\mathbf{0}$ and $\sum_{x\in\hat{T}}x \neq\mathbf{0}$ for all $\emptyset\neq \hat{T}\subsetneq \hat{S}$.
\end{definition}

Note that no minimal generating set of cardinality at least two can contain the zero vector.

\begin{theorem}
  \label{thm_M_C_a_formula}
  Let $C$ be a linear $[k+t,k]_2$ code and $\mathbf{a}$ its corresponding vector counting the multiplicities of the occurring information vectors. With this, we have
  $$
    M(C)=k+\sum_{\tau\in\mathbb{F}_2^t\backslash\{\mathbf{0}\}} {{a_\tau}\choose 2}+\sum_{\hat{S}\subseteq\mathbb{F}_2^t\,:\,\hat{S}\text{ is minimal generating and } 2\le \#\hat{S}\le t+1} 
    \,\,\prod_{\tau\in\hat{S}} a_\tau.
  $$
\end{theorem}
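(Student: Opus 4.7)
The strategy is to enumerate minimal codewords through the bijection $S\mapsto c^S$ between nonempty subsets of $\{1,\dots,k\}$ and nonzero codewords of $C$ (valid since $G$ is systematic and $q=2$), and to sort such $S$ according to the multiset $\{g^i_I:i\in S\}$. Corollary~\ref{cor_max_card_S} restricts the relevant cardinalities to $1\le \#S\le t+1$, and the $k$ singletons $S=\{i\}$ all give minimal codewords, producing the leading $k$.

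For $\#S\ge 2$, two filtering arguments via Lemma~\ref{lemma_zero_sum} will trim the cases. If some $i\in S$ has $g^i_I=\mathbf{0}$, then $T=\{i\}$ is proper nonempty with $c^T_I=\mathbf{0}$, so $c^S$ is non-minimal; we may thus assume every $g^i_I$ ($i\in S$) is nonzero. Similarly, if two distinct indices in $S$ share an information row and $\#S\ge 3$, then the corresponding pair $T$ witnesses non-minimality by the same lemma. The only surviving case with a repeated information row is $S=\{i,j\}$ with $g^i_I=g^j_I=\tau\ne\mathbf{0}$; here $c^S_I=\mathbf{0}$ and Lemma~\ref{lemma_zero_sum_characterization} certifies minimality. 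Summing over $\tau\ne\mathbf{0}$ yields the middle term $\sum_{\tau\ne\mathbf{0}}\binom{a_\tau}{2}$.

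In the remaining case the information rows $\{g^i_I:i\in S\}$ are pairwise distinct and all nonzero. Setting $\hat{S}:=\{g^i_I:i\in S\}$, one has $\#\hat{S}=\#S\in\{2,\dots,t+1\}$, $\overline{C^S}=\langle\hat{S}\rangle$, and $c^T_I=\sum_{\tau\in\hat{T}}\tau$ for every $\emptyset\neq T\subseteq S$ with $\hat{T}=\{g^i_I:i\in T\}$. Under this correspondence, Lemma~\ref{lemma_reduced_code} translates verbatim into the conditions of Definition~\ref{def_minimal_generating}: $c^S$ is minimal iff $\hat{S}$ is minimal generating. Conversely, each minimal generating $\hat{S}$ arises from exactly $\prod_{\tau\in\hat{S}}a_\tau$ subsets $S$ (one preimage per $\tau\in\hat{S}$), which furnishes the final sum and completes the count.

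The point that I expect to require the most care is the correct reading of Lemma~\ref{lemma_reduced_code} and of Definition~\ref{def_minimal_generating}: the clause ``$c^T_I\neq\mathbf{0}$ for all $\emptyset\neq T\subsetneq S$'' must be enforced as a universal side condition, not merely as part of the $c^S_I=\mathbf{0}$ disjunct --- consistent with the remark that no minimal generating subset of size at least two contains $\mathbf{0}$. The two preliminary filtering arguments based on Lemma~\ref{lemma_zero_sum} are precisely what secure this universal condition, so that once they are in place the reduction to minimal generating subsets of $\mathbb{F}_2^t$ and the weighted counting by $\prod_{\tau\in\hat{S}}a_\tau$ become routine.
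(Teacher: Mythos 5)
Your proof is correct and follows essentially the same route as the paper's: the $k$ singletons, the reduction of repeated nonzero information rows to $\#S=2$ giving $\sum_{\tau\neq\mathbf{0}}\binom{a_\tau}{2}$, and the identification of the remaining $S$ with minimal generating subsets $\hat{S}$ counted with weight $\prod_{\tau\in\hat{S}}a_\tau$ via Lemma~\ref{lemma_reduced_code}. One caveat on your closing remark: while you correctly observe that the clause ``$c^T_I\neq\mathbf{0}$ for all $\emptyset\neq T\subsetneq S$'' must be read as a universal side condition, it is not the two preliminary filters that secure it --- they only exclude zero-sum subsets $T$ of cardinality one or two, whereas larger zero-sum subsets (e.g.\ $\hat{T}=\{001,010,011\}\subsetneq\hat{S}$ for $t=3$, with all elements of $\hat{S}$ distinct and nonzero) are excluded only because such $\hat{S}$ fail to be minimal generating under that corrected reading of Definition~\ref{def_minimal_generating}; this does not affect the count.
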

\begin{proof}
  Let $c^S$ be a minimal codeword in $C$ for a subset $S\subseteq\{1,\dots,k\}$. Since $c^S\neq\mathbf{0}$ we have $S\neq \emptyset$. If $\# S=1$, then $c^S$ is minimal in 
  all cases, which gives $k$ possibilities. If $S$ contains two different elements $i$ and $j$ with $g^i_I=g^j_I$, then we deduce $\#S=2$ from Lemma~\ref{lemma_zero_sum} 
  and Lemma~\ref{lemma_zero_sum_characterization}. Since $i\neq j$ the codeword $c^{\{i,j\}}$ is indeed minimal, iff $g^i_I=g^j_I\neq\mathbf{0}$, which yields 
  $\sum_{\tau\in\mathbb{F}_2^t\backslash\{\mathbf{0}\}} {{a_\tau}\choose 2}$ further possibilities. In the remaining cases we have $2\le \#S\le t+1$, see Corollary~\ref{cor_max_card_S} 
  for the upper bound, and $g^i_I\neq g^j_I$ for all different $i,j\in S$. In other words $\hat{S}:=\left\{g^i_I\,:\, i\in S\right\}$ has cardinality $\# S$. 
  Due to Lemma~\ref{lemma_reduced_code} and Definition~\ref{def_minimal_generating} $c^S$ is minimal iff $\hat{S}$ is minimal generating. Given $\hat{S}$, the number of choices for $S$ are 
  $\prod_{\tau\in\hat{S}} a_\tau$.  
\end{proof}

In some cases it is possible to concretely describe the minimal generating sets in the formula of Theorem~\ref{thm_M_C_a_formula}:
\begin{proposition}
  \label{prop_canonical_projective_base}
  Let $C$ be a linear $[k+t,k]_2$ code and $\mathbf{a}$ its corresponding vector counting the multiplicities of the occurring information vectors. 
  If $a_\tau>0$ implies $\tau\in\mathcal{T}:=\left\{e_1,\dots,e_t,\mathbf{1}\right\}$, where $\mathbf{1}=e_1+\dots+e_t$, then we have
  $$  
    M(C)=k+\sum_{\tau\in\mathcal{T}} {{a_\tau}\choose 2}+\sum_{\mathbf{1}\subsetneq \hat{S}\subseteq\mathcal{T}} 
    \prod_{\tau\in\hat{S}} a_\tau. 
  $$
\end{proposition}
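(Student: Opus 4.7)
The plan is to specialize Theorem~\ref{thm_M_C_a_formula} to the hypothesis at hand. Since $a_\tau>0$ implies $\tau\in\mathcal{T}$, the first sum in the formula immediately collapses from $\tau\in\mathbb{F}_2^t\setminus\{\mathbf{0}\}$ to $\tau\in\mathcal{T}$, and in the second sum any $\hat{S}$ containing a $\tau\notin\mathcal{T}$ contributes a zero product. So everything reduces to identifying, among subsets $\hat{S}\subseteq\mathcal{T}$ with $2\le\#\hat{S}\le t+1$, exactly which ones are minimal generating in the sense of Definition~\ref{def_minimal_generating}.

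The key combinatorial step is a case split on whether $\mathbf{1}\in\hat{S}$. If $\mathbf{1}\notin\hat{S}$, then $\hat{S}\subseteq\{e_1,\dots,e_t\}$ consists of distinct unit vectors, so $\sum_{\tau\in\hat{S}}\tau$ is the indicator vector of the index set $I=\{i:e_i\in\hat{S}\}$; any proper nonempty subset $\hat{T}\subsetneq\hat{S}$ yields a vector whose support is a proper nonempty subset of $I$, so the sum is neither $\mathbf{0}$ nor minimal in $\langle\hat{S}\rangle$, and $\hat{S}$ fails to be minimal generating. If $\mathbf{1}\in\hat{S}$, write $\hat{S}=\{\mathbf{1}\}\cup\{e_i:i\in I\}$ for some $I\subseteq\{1,\dots,t\}$, and let $J=\{1,\dots,t\}\setminus I$. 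Then $\sum_{\tau\in\hat{S}}\tau$ equals the indicator vector of $J$ (which is $\mathbf{0}$ precisely when $\hat{S}=\mathcal{T}$).

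I would then verify minimal generating in the two remaining sub-cases. For $\hat{S}\subsetneq\mathcal{T}$ with $\mathbf{1}\in\hat{S}$, every nonzero vector in $\langle\hat{S}\rangle$ either has support inside $I$ (disjoint from $J$) or has the form $\mathbf{1}+\sum_{i\in I'}e_i$ with $I'\subseteq I$, whose support $\{1,\dots,t\}\setminus I'$ contains $J$; hence no nonzero codeword has support properly inside $J$, so $\sum_{\tau\in\hat{S}}\tau$ is minimal in $\langle\hat{S}\rangle$. For $\hat{S}=\mathcal{T}$ (the only case where the sum is $\mathbf{0}$), a short check shows that no proper nonempty subset sums to zero: a subset not containing $\mathbf{1}$ is a nonempty subset of linearly independent unit vectors, and a proper subset containing $\mathbf{1}$ has the form $\mathbf{1}+\sum_{i\in I'}e_i$ with $I'\subsetneq\{1,\dots,t\}$, whose complement support is nonempty.

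Combining the two cases, the minimal generating subsets of $\mathcal{T}$ of size between $2$ and $t+1$ are exactly those $\hat{S}$ with $\{\mathbf{1}\}\subsetneq\hat{S}\subseteq\mathcal{T}$, yielding the claimed formula. I do not expect any real obstacle beyond the sub-case $\hat{S}=\mathcal{T}$, where one must be careful that the definition of minimal generating requires the zero-sum condition (not minimality in the span); once that is cleanly separated from the generic sub-case, the proof is immediate.
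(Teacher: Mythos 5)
Your proposal is correct and follows essentially the same route as the paper: specialize Theorem~\ref{thm_M_C_a_formula} using $a_\tau=0$ for $\tau\notin\mathcal{T}$, then classify the minimal generating subsets of $\mathcal{T}$ by whether they contain $\mathbf{1}$. The paper compresses the verification (citing Lemma~\ref{lemma_reduced_code} and calling the non-minimality of the $\mathbf{1}$-free case ``clear''), whereas you spell out the support computations, including the zero-sum sub-case $\hat{S}=\mathcal{T}$; the content is the same.
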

\begin{proof}
  Due to Theorem~\ref{thm_M_C_a_formula} it suffices to check which subsets of $\mathcal{T}$ are minimal generating. If $\mathbf{1}\notin\hat{S}$, then 
  $\sum_{x\in\hat{S}} x$ is clearly not minimal within $\hat{S}$. In all other cases $\hat{S}$ is minimal generating, which easily follows from Lemma~\ref{lemma_reduced_code}.  
\end{proof}     

As an example let $k\ge 2t$ be integers and $A$ be the $k\times t$ matrix whose rows consist of $2$ copies each of the unit vectors $e_1,\ldots,e_t$ and $k-2t$ copies of the zero vector.   
Consider the $[k+t,k]$ linear code $C$ with generator matrix $G=[I_k\,|\,A]$. Note that $C$ is projective and   
$$M(C)=k+\ds\sum_{\tau\in\{e_1,\ldots,e_t\}} {a_{\tau}\choose 2}=k+t.$$
In \cite[Lemma 5.1]{Kashyap} it is shown that each projective $[k+t,k]_2$ code $C$ satisfies $M(C)\ge k+t$.

\section{Bounds for the maximum number of minimal codewords}

A projective base can also be used to construct linear $[k+t,k]_2$ codes with a relatively large number of minimal codewords. To this end, let $e_i$ denote the $i$th 
unit vector and $\mathbf{1}$ denote the all-$1$-vector (in $\mathbb{F}_2^t$). 

\begin{proposition}
  \label{prop_projective_base_construction}
  $$
    M_2(k+t,k)\ge \left\lfloor\frac{k}{t+1}\right\rfloor^{t+1}
  $$
\end{proposition}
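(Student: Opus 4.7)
The plan is to invoke Proposition~\ref{prop_canonical_projective_base} on a code whose row-information vectors are equidistributed (as much as possible) among the $t+1$ vectors of the projective base $\mathcal{T}=\{e_1,\ldots,e_t,\mathbf{1}\}$ of $\mathbb{F}_2^t$. The intuition is that any $S\subseteq\{1,\ldots,k\}$ containing exactly one index from each of the $t+1$ groups yields $c^S_I=\sum_{\tau\in\mathcal{T}}\tau=\mathbf{0}$ and hence, by Lemma~\ref{lemma_zero_sum_characterization} (the vectors of $\mathcal{T}$ form a minimal generating set in the strong sense), a minimal codeword; counting such choices of $S$ gives the claimed $\lfloor k/(t+1)\rfloor^{t+1}$ bound.

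Concretely, I set $m:=\lfloor k/(t+1)\rfloor$ and $r:=k-m(t+1)\in\{0,\ldots,t\}$, and take $C$ with systematic generator matrix $[I_k\,|\,A]$, where the $k$ rows of $A$ are $m+r$ copies of $e_1$ together with $m$ copies of each of $e_2,\ldots,e_t,\mathbf{1}$. Every row of $A$ lies in $\mathcal{T}$, so the hypothesis of Proposition~\ref{prop_canonical_projective_base} is satisfied, and its formula applies:
$$M(C)=k+\sum_{\tau\in\mathcal{T}}\binom{a_\tau}{2}+\sum_{\mathbf{1}\subsetneq\hat{S}\subseteq\mathcal{T}}\prod_{\tau\in\hat{S}}a_\tau.$$
For $t\ge 2$ the subset $\hat{S}=\mathcal{T}$ is admissible in the last sum and contributes $(m+r)\cdot m^t\ge m^{t+1}$; since every other summand is nonnegative, this already yields $M(C)\ge m^{t+1}$ and hence the claim.

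The only step that needs extra bookkeeping is the small cases $t\in\{0,1\}$, in which $|\mathcal{T}|\le 1$ so $\hat{S}=\mathcal{T}$ is no longer available in the second sum. Both are immediate from results already stated in the excerpt: $M_2(k,k)=k$ handles $t=0$, and $M_2(k+1,k)=\binom{k+1}{2}\ge\lfloor k/2\rfloor^2$ handles $t=1$. Apart from this, I anticipate no real obstacle: the combinatorial content is already packaged in Proposition~\ref{prop_canonical_projective_base}, and the entire proof amounts to choosing the balanced counting vector $\mathbf{a}$ so as to maximize (up to the rounding encoded in $r$) the dominant $\hat{S}=\mathcal{T}$ term.
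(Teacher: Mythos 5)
Your construction is exactly the one in the paper's proof: distribute the $k$ information vectors as evenly as possible over the canonical projective base $\{e_1,\dots,e_t,\mathbf{1}\}$ and count the $(t+1)$-subsets meeting each class once, which contribute $a_{e_1}\cdots a_{e_t}\cdot a_{\mathbf{1}}\ge\lfloor k/(t+1)\rfloor^{t+1}$ minimal codewords. The only cosmetic difference is that you read this term off the packaged formula of Proposition~\ref{prop_canonical_projective_base} rather than verifying minimality directly via Lemma~\ref{lemma_zero_sum_characterization}, and your separate treatment of the degenerate cases $t\in\{0,1\}$ is correct (the paper glosses over these).
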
 
\begin{proof}
  W.l.o.g.\ we assume $k\ge t+1$. Let $C$ be a linear $[k+t,k]_2$ code with systematic generator matrix $G$ such that $a_\tau(C)=0$ if 
  $\tau\notin\left\{e_1,\dots,e_t,\mathbf{1}\right\}$ and $a_\tau\ge \left\lfloor\frac{k}{t+1}\right\rfloor$ if $\tau\in\left\{e_1,\dots,e_t,\mathbf{1}\right\}$  
  for all $\tau\in\mathbb{F}_2^t$. Since $(t+1)\cdot \left\lfloor\frac{k}{t+1}\right\rfloor \le k$, 
  the construction is possible. Now we consider all subsets $S\subseteq \{1,\dots,k\}$ with cardinality $\#S=t+1$ such that $\#\left\{c^i_I\,:\,i\in S\right\}=t+1$, 
  i.e., each possible vector of information bits occurs exactly once. Note that there are 
  $$
    a_{e_1}\cdot\dots a_{e_t}\cdot a_{\mathbf{1}} \ge \left\lfloor\frac{k}{t+1}\right\rfloor^{t+1}
  $$  
  choices. Since $\sum_{i=1}^t e_i=\mathbf{1}$ and no proper subset of $\left\{e_1,\dots,e_t,\mathbf{1}\right\}$ sums to zero we can apply Lemma~\ref{lemma_zero_sum_characterization} 
  to deduce that those $c^S$ are minimal codewords. 
\end{proof}
The essential property of $\left\{e_1,\dots,e_t,\mathbf{1}\right\}$ used in the above proof is that of a projective basis. The explicit choice of vectors is called canonical 
basis in that context. We remark that it is also possible to precisely determine $M(C)$ if $a_\tau(C)\neq 0$ implies $\tau\in\left\{e_1,\dots,e_t,\mathbf{1}\right\}$ and those 
$a_\tau$ are given, see Proposition~\ref{prop_canonical_projective_base}. The codes constructed in Proposition~\ref{prop_projective_base_construction} show that the matroid upper 
bound $M_2(n,k)\leq {n\choose k-1}={{k+t}\choose{t+1}}$ is, up to a constant, asymptotically tight for every fixed value of $t$.

Our next aim is to conclude an upper bound for $M_2(k+t,k)$ from Theorem~\ref{thm_M_C_a_formula}. To this end, we will utilize an optimization problem\footnote{We 
are pretty sure that this problem has been studied in the literature before. However, since we were not able to find a reference, we give a self-contained proof here.}:
\begin{lemma}
  \label{lemma_symmetric_function_optimization}
  Let $s$, $r$, and $m$ be positive integers with $s\le r$ and $f\colon\mathbb{R}_{\ge 0}^r\to\mathbb{R}_{\ge 0}$ a function defined by
  $$
    f(x_1,\dots,x_r)=\sum_{S\subseteq \{1,\dots,r\}\,:\,\# S=s} \,\,\prod_{i\in S} x_i.
  $$
  Then, the optimization problem $\max f(x_1,\dots,x_r)$ subject to the constraint $\sum_{i=1}^r x_i=m$ has the unique optimal solution $x_i=\tfrac{m}{r}$ for all 
  $1\le i\le r$ with target value ${r\choose s}\cdot \left(\tfrac{m}{r}\right)^s$. If we additionally require that the $x_i$ have to be integers, then an 
  optimal solution is given by $x_i=\left\lfloor\tfrac{m+i-1}{r}\right\rfloor$ for $1\le i\le r$.
\end{lemma}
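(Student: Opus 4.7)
The plan is to prove both parts via a pairwise smoothing argument. For any two indices $i\neq j$, one can rewrite
\[
  f(x_1,\dots,x_r) \;=\; A\cdot x_i x_j \;+\; B\cdot (x_i+x_j) \;+\; C,
\]
where $A=e_{s-2}(x_k : k\neq i,j)$, $B=e_{s-1}(x_k : k\neq i,j)$, $C=e_s(x_k : k\neq i,j)$ are independent of $x_i$ and $x_j$. For any feasibility-preserving replacement $(x_i,x_j)\mapsto(x_i',x_j')$ with $x_i'+x_j'=x_i+x_j$ and $x_i',x_j'\ge 0$, only the quadratic term moves: the change in $f$ equals $A\cdot(x_i'x_j'-x_ix_j)$. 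This identity is the single workhorse throughout.

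For the real case, I would pick a maximizer on the compact simplex $\{x\in\mathbb{R}_{\ge 0}^r : \sum x_i=m\}$. If $x_i\neq x_j$ for some pair with $A>0$, then averaging $x_i$ and $x_j$ strictly increases $x_ix_j$ by AM--GM, contradicting optimality. The only escape is $A=0$ whenever $x_i\neq x_j$. But $A=0$ for nonnegative reals forces at most $s-3$ of the $x_k$ with $k\neq i,j$ to be positive, hence at most $s-1$ of all coordinates are positive, so every $s$-subset in the definition of $f$ hits a zero and $f=0$. Since the feasible point $x_1=\dots=x_s=m/s$, $x_{s+1}=\dots=x_r=0$ already yields $f=(m/s)^s>0$, this cannot occur at the maximum. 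Hence all coordinates coincide, giving $x_i=m/r$ and optimal value $\binom{r}{s}(m/r)^s$.

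For the integer case, the same decomposition shows that transferring one unit, $(x_i,x_j)\mapsto(x_i+1,x_j-1)$ with $x_j\ge x_i+2$, changes $f$ by $A\cdot(x_j-x_i-1)$, which is strictly positive when $A>0$. Ruling out the $A=0$ branch exactly as in the real case (assuming $m\ge s$; otherwise $f\equiv 0$ on integer points and the claim is vacuous), any integer maximizer satisfies $|x_i-x_j|\le 1$ for all $i,j$. Thus each $x_i\in\{\lfloor m/r\rfloor,\lceil m/r\rceil\}$ with multiplicities forced by $\sum_i x_i=m$. A direct check shows that $x_i=\lfloor(m+i-1)/r\rfloor$ realizes precisely this distribution, using the elementary identity $\sum_{i=1}^r \lfloor(m+i-1)/r\rfloor=m$.

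The main obstacle is the boundary case $A=0$, since the smoothing inequality only gives a strict gain when $A>0$ and so does not by itself forbid configurations with many zero coordinates. The key observation that $A=0$ forces $f$ itself to vanish, compared against an explicit strictly positive feasible point, cleanly closes this gap. Minor side issues are the degenerate case $s=1$ (where $f\equiv m$ is constant, so the uniqueness claim in the real case should be read for $s\ge 2$) and the arithmetic identity above needed to verify feasibility of the proposed integer solution.
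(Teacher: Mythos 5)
Your proposal is correct and follows essentially the same route as the paper's proof: a pairwise smoothing argument in which the dependence of $f$ on $(x_i,x_j)$ is isolated (your decomposition $A\,x_ix_j+B\,(x_i+x_j)+C$ is exactly the paper's case split on $\#\left(S\cap\{i,j\}\right)$), with the degenerate branch $A=0$ dispatched by observing that it forces $f=0$, and the integer case handled by the same one-unit transfer. Your additional remarks --- existence of a maximizer by compactness, the failure of uniqueness when $s=1$, and the $m<s$ caveat in the integer case --- are small points the paper glosses over, but they do not alter the argument.
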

\begin{proof}
  For $r=1$ the statements are obvious, so that we assume $r\ge 2$ in the following. 
  Assume that for a given optimal solution of the real-valued optimization problem stated above, there are indices $1\le i,j\le r$ with $x_i\neq x_j$. From the given 
  vector $\mathbf{x}=(x_1,\dots,x_r)$ we construct a vector $\bar{\mathbf{x}}$ by replacing the $i$th and the $j$th component of $\mathbf{x}$ both by $\tfrac{x_i+x_j}{2}$. 
  Now we want to compare $f(\mathbf{x})$ and $f(\bar{\mathbf{x}})$. Clearly, we have
  $$
    \sum_{S\subseteq\{1,\dots,r\}\backslash\{i,j\}\,:\,\#S=s} \,\,\prod_{h\in S} \bar{x}_h= 
    \sum_{S\subseteq\{1,\dots,r\}\backslash\{i,j\}\,:\,\#S=s} \,\,\prod_{h\in S} x_h.
  $$  
  For the cases where the subset $S$ intersects $\{i,j\}$ in exactly one element we compute
  \begin{eqnarray*}
    && \sum_{S\subseteq\{1,\dots,r\} \,:\,\#S=s, \# S\cap\{i,j\}=1} \,\,\prod_{h\in S} \bar{x}_h \\ 
    &=& \sum_{\bar{S}\subseteq\{1,\dots,r\}\backslash\{i,j\}\,:\,\#\bar{S}=s-1} \,\,\left(\bar{x}_i+\bar{x}_j\right)\cdot\prod_{h\in \bar S} \bar{x}_h \\
    &=& \sum_{\bar{S}\subseteq\{1,\dots,r\}\backslash\{i,j\}\,:\,\#\bar{S}=s-1} \,\,\left(x_i+x_j\right)\cdot\prod_{h\in \bar S} x_h \\
    &=& \sum_{S\subseteq\{1,\dots,r\} \,:\,\#S=s, \# S\cap\{i,j\}=1} \,\,\prod_{h\in S} x_h,
  \end{eqnarray*}
  i.e., again there is no difference. If $S$ contains both $i$ and $j$, then we can write $S=\bar{S}\cup\{i,j\}$ with a subset $\bar{S}\subseteq \{1,\dots,r\}\backslash\{i,j\}$ 
  and compute 
  \begin{eqnarray*}
  && \sum_{\bar{S}\in\{1,\dots,r\}\backslash\{i,j\}\,:\,\#\bar{S}=s-2} \,\,\bar{x}_i\cdot\bar{x}_j\cdot \prod_{h\in \bar{S}} \bar{x}_h \\
  &=& \sum_{\bar{S}\in\{1,\dots,r\}\backslash\{i,j\}\,:\,\#\bar{S}=s-2} \,\,\left(x_ix_j+\left(\frac{x_i-x_j}{2}\right)^2\right)\cdot \prod_{h\in \bar{S}} x_h\\ 
  &\ge& \sum_{\bar{S}\in\{1,\dots,r\}\backslash\{i,j\}\,:\,\#\bar{S}=s-2} \,\,x_i\cdot x_j\cdot \prod_{h\in \bar{S}} x_h.
  \end{eqnarray*}
  Thus, we have $f(\bar{\mathbf{x}})\ge f(\mathbf{x})$. Next we remark that we have equality iff $\prod_{h\in \bar{S}} x_h=0$ for all 
  subsets $\bar{S}\in\{1,\dots,r\}\backslash\{i,j\}\,:\,\#\bar{S}=s-2$, i.e., there are most $s-3$ indices $h\in \{1,\dots,r\}\backslash\{i,j\}$ with 
  $x_h\neq 0$, so that $f(\mathbf{x})=0$, which clearly is not an optimal solution. Thus, in an optimal solution $\mathbf{x}$ all entries have to be equal. 
  Since $\sum_{i=1}^r x_i=m$ we obtain $x_i=\tfrac{m}{r}$ and the stated target value is a direct conclusion.     
  
  For the case with integral variables we assume that $\mathbf{x}=(x_1,\dots,x_r)$ is an optimal solution such that there exist indices $1\le i,j\le r$ with 
  $x_i-x_j\ge 2$. Now let $\bar{\mathbf{x}}$ arose from $\mathbf{x}$ by increasing $x_j$ and decreasing $x_i$ by one, respectively. Since $\mathbf{x}\in\mathbb{N}^r$ and 
  $x_i-x_j\ge 2$, also $\bar{\mathbf{x}}\in\mathbb{N}^r$ and $\sum_{h=1}^r \bar{x}_h=\sum_{h=1}^r x_h=m$. Next we will show $f(\bar{\mathbf{x}})\ge f(\mathbf{x})$. To this end, 
  we proceed as before and distinguish the summands in $\sum_{S\subseteq \{1,\dots,r\}\,:\,\# S=s} \,\,\prod_{i\in S} x_i$ and 
  $\sum_{S\subseteq \{1,\dots,r\}\,:\,\# S=s} \,\,\prod_{i\in S} \bar{x}_i$ according to the cardinality of $S\cap\{i,j\}$. As before, for $\# S\cap\{i,j\}\le 1$ there is 
  no difference if we compare the sum over all respective subsets $S$. For the cases $\# S\cap\{i,j\}=2$ we can utilize the inequality     
  $$
    (x_i-1)\cdot (x_j+1)\cdot z=x_ix_jz+(x_i-x_j-1)\cdot z\ge x_ix_jz
  $$
  for $z\ge 0$  to conclude $f(\bar{\mathbf{x}})\ge f(\mathbf{x})$. Thus, there exists an optimal solution $\mathbf{x}$ with $\left|x_i-x_j\right|\le 1$ for all $1\le i,j\le r$. Due 
  to symmetry we can assume $x_1\le\dots\le x_r$ w.l.o.g. Since $\sum_{i=1}^r x_i=m$, we obtain the stated formula $x_i=\left\lfloor\tfrac{m+i-1}{r}\right\rfloor$ for $1\le i\le r$. 
\end{proof}

\begin{proposition}
  \label{prop_first_improved_upper_bound}
  Let $C$ be a linear $[k+t,k]_2$ code and $\mathbf{a}$ its corresponding vector counting the multiplicities of the occurring information vectors. With this, we have
  $$
    M(C)\le \frac{(k+1)k}{2}+\sum_{s=2}^{t+1} {{2^t-1}\choose s} \cdot \left(\frac{k}{2^t-1}\right)^s.
  $$
\end{proposition}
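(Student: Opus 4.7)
The plan is to substitute the exact formula from Theorem~\ref{thm_M_C_a_formula} for $M(C)$ and then bound each of its three pieces in terms of $k$ alone, exploiting the constraint $\sum_{\tau\in\mathbb{F}_2^t} a_\tau=k$ (so in particular $\sum_{\tau\ne\mathbf{0}} a_\tau\le k$).

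First I would dispatch the quadratic piece by a pair-counting (convexity) argument: since pairs of row-indices that share a common nonzero information vector form a subset of all pairs of row-indices, one has $\sum_{\tau\ne\mathbf{0}}\binom{a_\tau}{2}\le \binom{k}{2}$. Combined with the leading $+k$ from Theorem~\ref{thm_M_C_a_formula}, this produces the $\tfrac{k(k+1)}{2}$ summand on the right-hand side of the claim.

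Second, for the sum over minimal generating subsets, the remark right after Definition~\ref{def_minimal_generating} guarantees that every minimal generating $\hat S$ with $\#\hat S\ge 2$ lies inside $\mathbb{F}_2^t\setminus\{\mathbf{0}\}$, a set of size $2^t-1$. Dropping the ``minimal generating'' requirement only enlarges the sum, and for each fixed $s\in\{2,\dots,t+1\}$ what remains is precisely the elementary symmetric polynomial $e_s$ of degree $s$ in the $2^t-1$ nonnegative variables $(a_\tau)_{\tau\ne\mathbf{0}}$. Applying Lemma~\ref{lemma_symmetric_function_optimization} with $r=2^t-1$ and $m:=\sum_{\tau\ne\mathbf{0}} a_\tau$, and using monotonicity of $e_s$ on $\mathbb{R}_{\ge 0}^{r}$ to relax the equality $\sum x_i=m$ to the worst case $m=k$, yields $e_s\le \binom{2^t-1}{s}\bigl(k/(2^t-1)\bigr)^s$. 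Summing over $s$ delivers the remaining term in the claim.

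I do not anticipate a serious obstacle: once Theorem~\ref{thm_M_C_a_formula} and Lemma~\ref{lemma_symmetric_function_optimization} are in hand, the proof is essentially an assembly. The only bookkeeping worth flagging is the degenerate regime $s>2^t-1$ (in which $\binom{2^t-1}{s}=0$, consistent with the nonexistence of minimal generating sets of that cardinality) and the mild use of monotonicity of $e_s$ to absorb the slack between $\sum_{\tau\ne\mathbf{0}} a_\tau\le k$ and the equality version of the constraint required by Lemma~\ref{lemma_symmetric_function_optimization}.
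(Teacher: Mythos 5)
Your proposal is correct and follows essentially the same route as the paper: substitute the formula of Theorem~\ref{thm_M_C_a_formula}, bound $k+\sum_{\tau\neq\mathbf{0}}\binom{a_\tau}{2}$ by $\frac{(k+1)k}{2}$, relax the sum over minimal generating sets to the full elementary symmetric polynomials in the $2^t-1$ nonzero information vectors, and invoke Lemma~\ref{lemma_symmetric_function_optimization} with $r=2^t-1$ and $m=k$. The only cosmetic difference is that you justify setting $a_{\mathbf{0}}$'s mass to the nonzero coordinates via monotonicity of $e_s$, where the paper simply says one may assume $a_{\mathbf{0}}=0$ when maximizing; these are the same observation.
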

\begin{proof}
  We want to apply Theorem~\ref{thm_M_C_a_formula} and remark that we clearly have
  $$
    k+\sum_{\tau\in\mathbb{F}_2^t\backslash\{\mathbf{0}\}} {{a_\tau}\choose 2} \le \frac{(k+1)k}{2}.
  $$
  Since no minimal generating set of cardinality at least two contains the zero vector and the $a_{\tau}$ are non-negative, we conclude  
  \begin{equation}
    \sum_{\hat{S}\subseteq\mathbb{F}_2^t\,:\,\hat{S}\text{ is minimal generating and } 2\le \#\hat{S}\le t+1} 
    \,\,\prod_{\tau\in\hat{S}} a_\tau \le 
    \sum_{S\subseteq\mathbb{F}_2^t\backslash\{\mathbf{0}\}\,:\,2\le \# S\le t+1}\,\, 
    \prod_{\tau\in S} a_\tau.\label{ie_tail}
  \end{equation}    
  Since $\sum_{\tau\in\mathbb{F}_2^t} a_{\tau}=k$ we can assume $a_{\mathbf{0}}=0$ when maximizing the right-hand side of Inequality~(\ref{ie_tail}). 
  Applying Lemma~\ref{lemma_symmetric_function_optimization} onto the right-hand side of Inequality~(\ref{ie_tail}), with $s=\# S$, $r=2^t-1$, and $m=k$, 
  gives the stated upper bound for $M(C)$.
\end{proof}

We remark that Proposition~\ref{prop_first_improved_upper_bound} improves upon the matroid upper bound $M_2(k+t,k)\le {{k+t}\choose{t+1}}$. As an example we state that 
Proposition~\ref{prop_first_improved_upper_bound} yields 
\begin{eqnarray*}
  M_2(k+2,k) &\le& \frac{k^3}{27}+\mathcal{O}\!\left(k^2\right),\\
  M_2(k+3,k) &\le& \frac{5k^4}{343}+\mathcal{O}\!\left(k^3\right),\text{ and}\\
  M_2(k+4,k) &\le& \frac{1001k^5}{253125}+\mathcal{O}\!\left(k^4\right),\\
\end{eqnarray*} 
while ${{k+2}\choose{2+1}}=\frac{k^3}{6}+\mathcal{O}\!\left(k^2\right)$, ${{k+3}\choose{3+1}}=\frac{k^4}{24}+\mathcal{O}\!\left(k^3\right)$, and 
${{k+4}\choose{4+1}}=\frac{k^5}{120}+\mathcal{O}\!\left(k^4\right)$. Note however that the fraction between the coefficients of the leading terms tend to $1$ as 
$t$ tends to infinity. In order to obtain tighter bounds we need to study the properties of minimal generating sets. 

\begin{lemma}
  \label{lemma_minimal_generating_card_2}
  For two different elements $a,b\in \mathbb{F}_2^t\backslash\{\mathbf{0}\}$ the set $\{a,b\}$ is minimal generating iff $\supp(a)\cap \supp(b)\neq\emptyset$.
\end{lemma}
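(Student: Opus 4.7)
The plan is to unwind the definition of minimal generating for the two-element case and directly compare supports. Since $a\ne b$, we have $a+b\neq \mathbf{0}$, so the alternative ``sum is zero'' clause in Definition~\ref{def_minimal_generating} is vacuous and $\{a,b\}$ is minimal generating precisely when $a+b$ is a minimal codeword of $\langle a,b\rangle=\{\mathbf{0},a,b,a+b\}$. Thus the task reduces to deciding when neither $\supp(a)$ nor $\supp(b)$ is a proper subset of $\supp(a+b)$.

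The key identity to invoke is $\supp(a+b)=\supp(a)\triangle\supp(b)=\bigl(\supp(a)\cup\supp(b)\bigr)\setminus\bigl(\supp(a)\cap\supp(b)\bigr)$, which holds over $\mathbb{F}_2$. I would then split into two cases according to whether $\supp(a)\cap\supp(b)$ is empty or not. If the intersection is empty, then $\supp(a+b)=\supp(a)\cup\supp(b)$ strictly contains $\supp(a)$ (because $b\neq\mathbf{0}$ contributes at least one new coordinate, and $\supp(a)$ is nonempty since $a\neq\mathbf{0}$), so $a+b$ is non-minimal and $\{a,b\}$ fails to be minimal generating. If instead the intersection is nonempty, pick $i\in\supp(a)\cap\supp(b)$; then $i\in\supp(a)$ but $i\notin\supp(a+b)$, so $\supp(a)\not\subseteq\supp(a+b)$, and symmetrically $\supp(b)\not\subseteq\supp(a+b)$, giving minimality of $a+b$.

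There is no real obstacle here: the argument is a short case analysis relying only on the symmetric-difference formula for the support of a sum and the fact that $a,b$ are distinct and nonzero. The only care needed is to distinguish proper containment from equality of supports, which is handled automatically because $a\ne b$ forces $\supp(a)\ne\supp(a+b)$ and $\supp(b)\ne\supp(a+b)$ (otherwise $a+b=b$ or $a+b=a$, contradicting $a\ne\mathbf{0}$ or $b\ne\mathbf{0}$).
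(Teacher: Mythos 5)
Your argument is correct and follows essentially the same route as the paper: since $a\neq b$ forces $a+b\neq\mathbf{0}$, minimal generation reduces to minimality of $a+b$ in $\{\mathbf{0},a,b,a+b\}$, which is decided by the support identity $\supp(a+b)=\supp(a)\triangle\supp(b)$ exactly as in your case analysis. The paper's proof is just a terser version of the same observation (and in fact states the key equivalence with the conditions apparently swapped, so your more careful spelling-out is welcome).
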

\begin{proof}
  Note that we have $a+b\neq \mathbf{0}$. Since $b\neq\mathbf{0}$ the statement follows from the equivalence $\supp(a)\subseteq \supp(a+b)$ iff $\supp(a)\cap \supp(b)\neq\emptyset$.
\end{proof}

As an application of Theorem~\ref{thm_M_C_a_formula} we compute $M(C)$ in dependence of $\mathbf{a}$ for $t=2$.
\begin{proposition}
  \label{prop_a_formula_t_2}
  Let $C$ be a linear $[k+2,k]_2$ code and $\mathbf{a}$ its corresponding vector counting the multiplicities of the occurring information vectors. With this, we have
  \begin{eqnarray*}
    M(C)&=&k+\frac{a_{10}\cdot(a_{10}-1)}{2}+\frac{a_{01}\cdot(a_{01}-1)}{2}+\frac{a_{11}\cdot(a_{11}-1)}{2}\\ 
    && +a_{10}\cdot a_{11}+a_{01}\cdot a_{11}+a_{10}\cdot a_{01}\cdot a_{11}\\ 
    &=& k+\frac{(k-a_{00})\cdot(k-a_{00}-1)}{2}- a_{10}\cdot a_{01}+a_{10}\cdot a_{01}\cdot a_{11}.
  \end{eqnarray*}
\end{proposition}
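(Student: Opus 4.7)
The plan is to apply Theorem~\ref{thm_M_C_a_formula} directly, which reduces the task to enumerating every minimal generating subset $\hat S\subseteq\mathbb{F}_2^2\setminus\{\mathbf{0}\}=\{10,01,11\}$ of cardinality between $2$ and $t+1=3$, and then reading off the associated product $\prod_{\tau\in\hat S}a_\tau$.

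For the three pairs, I would use Lemma~\ref{lemma_minimal_generating_card_2}: the supports of $10$ and $01$ are disjoint, so $\{10,01\}$ is \emph{not} minimal generating, while $\{10,11\}$ and $\{01,11\}$ both have intersecting supports and thus contribute the terms $a_{10}a_{11}$ and $a_{01}a_{11}$. For the single size-three candidate $\{10,01,11\}$, the sum equals $\mathbf{0}$, so I would invoke the second clause of Definition~\ref{def_minimal_generating} and check by hand that none of the six nonempty proper subsets sum to $\mathbf{0}$ (they sum to one of $10,01,11,11,01,10$). Hence $\{10,01,11\}$ is minimal generating and contributes $a_{10}a_{01}a_{11}$. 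Together with the trivial size-one count $k$ and the $\binom{a_\tau}{2}$ terms from Theorem~\ref{thm_M_C_a_formula}, this yields the first displayed formula verbatim.

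The second equality is a purely algebraic rewriting. Since $\sum_\tau a_\tau=k$ we have $k-a_{00}=a_{10}+a_{01}+a_{11}$, and the identity
$$
\binom{x+y+z}{2}=\binom{x}{2}+\binom{y}{2}+\binom{z}{2}+xy+xz+yz
$$
with $(x,y,z)=(a_{10},a_{01},a_{11})$ lets me substitute $\binom{a_{10}}{2}+\binom{a_{01}}{2}+\binom{a_{11}}{2}+a_{10}a_{11}+a_{01}a_{11}=\binom{k-a_{00}}{2}-a_{10}a_{01}$ into the first expression; the cubic term $a_{10}a_{01}a_{11}$ is unaffected. No genuine obstacle appears: the only step requiring care is the minimality check for $\{10,01,11\}$, which is a short finite case analysis.
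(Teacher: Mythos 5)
Your proposal is correct and follows essentially the same route as the paper: apply Theorem~\ref{thm_M_C_a_formula}, use Lemma~\ref{lemma_minimal_generating_card_2} to rule out only the pair $\{10,01\}$, and verify that the unique triple is minimal generating. The paper simply asserts the latter without the explicit subset check, and handles the second equality by the same substitution $k-a_{00}=a_{10}+a_{01}+a_{11}$; your write-up is just slightly more detailed.
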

\begin{proof}
  Due to Lemma~\ref{lemma_minimal_generating_card_2} the set $\{10,01\}$ is the only subset of $\mathbb{F}_2^2\backslash\{\mathbf{0}\}$ that has cardinality $2$ and is not 
  minimal generating. The unique subset $\{01,10,11\}$ of $\mathbb{F}_2^2\backslash\{\mathbf{0}\}$ of cardinality $3$ is indeed minimal generating. For the second equation 
  note that $k=a_{00}+a_{01}+a_{10}+a_{11}$. 
\end{proof}

Maximizing the formula from Proposition~\ref{prop_a_formula_t_2} we obtain: 
\begin{proposition}
  We have
  $$M_2(k+2,k)=k+k(k-1)/2+\lfloor(k-1)/3\rfloor \cdot \lfloor k/3\rfloor \cdot \lfloor(k+1)/3\rfloor$$
  for all $k\ge 1$.
\end{proposition}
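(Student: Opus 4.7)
The plan is to maximize the formula from Proposition~\ref{prop_a_formula_t_2}, namely
$$M(C)=k+\frac{(k-a_{00})(k-a_{00}-1)}{2}-a_{10}a_{01}+a_{10}a_{01}a_{11},$$
over all non-negative integer counting vectors $\mathbf{a}=(a_{00},a_{01},a_{10},a_{11})$ with $a_{00}+a_{01}+a_{10}+a_{11}=k$, and then realize the optimum by an explicit systematic generator matrix.

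First I would argue that $a_{00}=0$ at any optimum. Transferring a unit of mass from $a_{00}$ to $a_{11}$ strictly increases the quadratic term by $k-a_{00}$, weakly increases the cubic term by $a_{10}a_{01}\ge 0$, and leaves $-a_{10}a_{01}$ unchanged, so it never hurts. After this reduction the expression collapses to
$$M(C)=k+\frac{k(k-1)}{2}+a_{10}\cdot a_{01}\cdot(a_{11}-1)$$
subject to $a_{10}+a_{01}+a_{11}=k$. Since $a_{11}=0$ gives a non-positive contribution while the assignment $a_{10}=a_{01}=0$ already yields $0$, any optimum may be assumed to satisfy $a_{11}\ge 1$. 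Substituting $u=a_{11}-1$ reduces the remaining task to maximizing the fully symmetric product $a_{10}\cdot a_{01}\cdot u$ over non-negative integers with $a_{10}+a_{01}+u=k-1$.

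This is exactly the integer instance of Lemma~\ref{lemma_symmetric_function_optimization} with $r=s=3$ and $m=k-1$, whose optimum equals $\lfloor(k-1)/3\rfloor\cdot\lfloor k/3\rfloor\cdot\lfloor(k+1)/3\rfloor$; that yields the upper bound. For the matching lower bound I would take a systematic $[k+2,k]_2$ code whose non-identity columns realize the counting vector $a_{00}=0$, $a_{01}=\lfloor(k-1)/3\rfloor$, $a_{10}=\lfloor k/3\rfloor$, $a_{11}=\lfloor(k+1)/3\rfloor+1$ (which sum to $k$ by Hermite's identity $\lfloor(k-1)/3\rfloor+\lfloor k/3\rfloor+\lfloor(k+1)/3\rfloor=k-1$), and evaluate $M(C)$ directly via Proposition~\ref{prop_a_formula_t_2}. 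There is no serious obstacle; the only delicate point is the boundary cases $k\in\{1,2,3\}$, where the cubic term vanishes and the assumption $a_{11}\ge 1$ becomes vacuous, but these follow by direct substitution into Proposition~\ref{prop_a_formula_t_2}.
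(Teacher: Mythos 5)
Your proposal is correct and follows essentially the same route as the paper: both maximize the formula of Proposition~\ref{prop_a_formula_t_2} by reducing to $a_{00}=0$, $a_{11}\ge 1$, rewriting the cubic part as $a_{10}\cdot a_{01}\cdot(a_{11}-1)$, and invoking Lemma~\ref{lemma_symmetric_function_optimization} with $m=k-1$, $r=s=3$. Your version is in fact slightly more careful than the paper's (which simply declares $a_{00}=0$ and $a_{11}\ge 1$ ``obvious''), since you supply the mass-transfer argument and check the small cases $k\in\{1,2,3\}$ explicitly.
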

\begin{proof}
  Let $C$ be a $[k+2,k]_2$ code.  From the latter expression for $M(C)$ in Proposition~\ref{prop_a_formula_t_2} it is obvious that $a_{00}=0$ and $a_{11}\ge 1$ in the maximum. 
  Thus, it remains to maximize 
  $$
    f\!\left(a_{01},a_{10},a_{11}\right)=a_{10}\cdot a_{01}\cdot a_{11}- a_{10}\cdot a_{01}=a_{10}\cdot a_{01}\cdot\left(a_{11}-1\right)
  $$
  subject to $a_{01}+a_{10}+a_{11}=k$ and $a_{01},a_{10},a_{11}\in\mathbb{N}$. It is well known that $f$ is maximized iff $a_{01}$, $a_{10}$, and $a_{11}-1$ are as 
  equal as possible while satisfying $a_{01}+a_{10}+\left(a_{11}-1\right)$, c.f.\ Lemma~\ref{lemma_symmetric_function_optimization}. Thus, an optimal solution is given by 
  $a_{01}=\left\lfloor\frac{k-1\,+\,2}{3}\right\rfloor$, $a_{10}=\left\lfloor\frac{k-1\,+\,1}{3}\right\rfloor$, and 
  $a_{11}=\left\lfloor\frac{k-1}{3}\right\rfloor+1$. Plugging into the formula in Proposition~\ref{prop_a_formula_t_2} gives the stated result.
\end{proof}

\begin{proposition}
  \label{prop_a_formula_t_3}
  Let $C$ be a linear $[k+3,k]_2$ code and $\mathbf{a}$ its corresponding vector counting the multiplicities of the occurring information vectors. With this, we have
  \begin{eqnarray*}
    M(C)&=&k+\sum_{\tau\in\mathbb{F}_2^3\backslash\{\mathbf{0}\}} \frac{a_{\tau}\cdot\left(a_{\tau}-1\right)}{2}
    +a_{110}\cdot\left(a_{101}+a_{011}+a_{111}\right)+a_{101}\cdot\left(a_{011}+a_{111}\right)+a_{011}\cdot a_{111}\\ 
    && +a_{100}\cdot \left(a_{110}+a_{101}+a_{111}\right)+a_{010}\cdot \left(a_{110}+a_{011}+a_{111}\right)+a_{001}\cdot \left(a_{011}+a_{101}+a_{111}\right) \\ 
    && +a_{100}a_{010}a_{110}+a_{100}a_{001}a_{101}+a_{010}a_{001}a_{011}  +a_{100}a_{010}a_{111}+a_{100}a_{001}a_{111}\\ &&+a_{010}a_{001}a_{111}
       +a_{100}a_{110}a_{011}+a_{100}a_{011}a_{101}
       +a_{010}a_{110}a_{101}+a_{010}a_{101}a_{011}\\
       &&+a_{001}a_{110}a_{101}+a_{001}a_{110}a_{011}
       +a_{100}a_{011}a_{111}+a_{010}a_{101}a_{111}+a_{001}a_{110}a_{111}\\&&+a_{110}a_{101}a_{011}
       +a_{110}a_{101}a_{111}+a_{110}a_{011}a_{111}+a_{011}a_{101}a_{111}\\ 
    && +a_{100}a_{010}a_{001}a_{111}+a_{100}a_{011}a_{110}a_{001}+a_{100}a_{101}a_{011}a_{010}+a_{100}a_{101}a_{110}a_{111}\\ 
    && + a_{010}a_{110}a_{101}a_{001}+a_{010}a_{110}a_{011}a_{111}+a_{001}a_{011}a_{101}a_{111} 
  \end{eqnarray*}
\end{proposition}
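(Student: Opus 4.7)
The plan is to apply Theorem~\ref{thm_M_C_a_formula} with $t=3$ and enumerate every minimal generating subset $\hat{S}\subseteq\mathbb{F}_2^3\setminus\{\mathbf{0}\}$ of cardinality $2$, $3$, and $4$. The term $k+\sum_{\tau}\binom{a_{\tau}}{2}$ from the theorem reproduces the first line of the asserted formula verbatim, so everything else is a case analysis on $\#\hat{S}$.

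For $\#\hat{S}=2$ I would invoke Lemma~\ref{lemma_minimal_generating_card_2}: a pair of distinct nonzero vectors fails to be minimal generating iff the two supports are disjoint. In $\mathbb{F}_2^3$ the disjoint-support pairs are the three pairs drawn from $\{e_1,e_2,e_3\}$ together with the three complementary pairs $\{e_i,\mathbf{1}+e_i\}$. Deleting these six pairs from the $\binom{7}{2}=21$ candidates leaves exactly the $15$ quadratic monomials appearing in the second and third lines of the proposition.

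For $\#\hat{S}=3$ I would split Definition~\ref{def_minimal_generating} into its two cases. In case~(b), where the three vectors sum to $\mathbf{0}$ with no proper subset vanishing, we recognize exactly the $7$ lines of the Fano plane $\mathrm{PG}(2,2)$, producing the seven cubic terms whose three index vectors sum to $\mathbf{0}$. In case~(a), any linearly dependent triple of distinct nonzero vectors spans a $2$-dimensional subspace and automatically sums to zero (falling into case~(b)), so the triple must be linearly independent and hence span $\mathbb{F}_2^3$; since the only minimal codewords of the code $\mathbb{F}_2^3$ are the unit vectors, the requirement is that the triple's sum has weight~$1$. A direct count yields $4$ such triples per unit vector $e_i$, matching the remaining $12$ cubic terms in the list.

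For $\#\hat{S}=4$, Corollary~\ref{cor_max_card_S} forces $c^S_I=\mathbf{0}$, so only case~(b) can occur. Since the sum of all seven nonzero vectors of $\mathbb{F}_2^3$ vanishes, a $4$-subset has zero sum iff its $3$-element complement is a Fano line, which gives $7$ candidates. Pairs of distinct nonzero vectors never sum to $\mathbf{0}$, and a triple in the complement summing to $\mathbf{0}$ would be another Fano line $L'$; but two distinct lines of $\mathrm{PG}(2,2)$ meet in a unique point, so the line through two points of the complement meets the excluded line in a point, forcing its third point into the excluded line and out of the complement. Hence all $7$ complements are minimal generating, yielding the last seven quartic terms. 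The main obstacle is keeping the case-(a) enumeration for $\#\hat{S}=3$ organized and matching the $12$ non-Fano triples with the proposition's list; the remaining cases reduce either to Lemma~\ref{lemma_minimal_generating_card_2} or to an elementary Fano-plane argument.
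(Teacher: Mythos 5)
Your proposal is correct and follows the same route as the paper: apply Theorem~\ref{thm_M_C_a_formula} and enumerate the minimal generating subsets of $\mathbb{F}_2^3\setminus\{\mathbf{0}\}$ by cardinality, using Lemma~\ref{lemma_minimal_generating_card_2} to discard the six disjoint-support pairs and then counting the $19$ triples and the $7$ quadruples (complements of Fano lines). In fact you supply more detail than the paper, which only remarks that the $19$ triples are ``cumbersome to check by hand''; your classification of them into the $7$ Fano lines plus the $12$ independent triples whose sum is a unit vector ($4$ per unit vector), together with the two-lines-always-meet argument showing every line complement is minimal generating, is a clean and correct replacement for that hand check.
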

\begin{proof}
  We apply Theorem~\ref{thm_M_C_a_formula}. From the ${7\choose 2}=21$ $2$-subsets of $\mathbb{F}_2^3\backslash\{\mathbf{0}\}$ only the six subsets 
  $$
   \{100,010\}, \{100,001\}, \{010,001\}, \{100,011\}, \{010,101\}, \{001,110\}   
  $$ 
  violate the condition from Lemma~\ref{lemma_minimal_generating_card_2}. The 15 other combinations are listed in the first two rows of the stated formula. 
  It is a bit cumbersome to check by hand, but out of the ${7\choose 3}=35$ $3$-subsets of $\mathbb{F}_2^3\backslash\{\mathbf{0}\}$ just those $19$
  listed in the rows three to six of the stated formula satisfy the criterion of Lemma~\ref{lemma_reduced_code}. 
  The sum over the $7$ projective bases of $\mathbb{F}_2^3$ can be stated as
  $$
    \sum_{\tau_3\in\mathbb{F}_2^3\backslash\{\mathbf{0},\tau_1,\tau_2,\tau_1+\tau_2\}} a_{\tau_1}a_{\tau_2}a_{\tau_3} a_{\tau_1+\tau_2+\tau_3},
  $$
  see the subsequent Proposition~\ref{prop_leading_term_a}, and also be spelled out as done in the last two rows of the formula in the statement of the proposition.  
\end{proof} 

The exact maximization of the formula of Proposition~\ref{prop_a_formula_t_3} might be a technical challenge, while it is easy to come up with a conjecture 
for large enough values of $k$:
\begin{conjecture}
  \label{conj_m_2_t_3}
  For  $k\ge 4$ the exact value of $M_2(k+3,k)$ is given by the formula of Proposition~\ref{prop_a_formula_t_3} with 
  $\mathbf{a}=\left(a_{000},a_{100},a_{010},a_{001},a_{110},a_{101},a_{011},a_{111}\right)$, where 
  $$
    \mathbf{a}=\left\{\begin{array}{rcl}
    (l,l,l,l+1,l+1,l+1,l+1) &:& k=4+7l,\\
    (l,l,l,l+1,l+1,l+1,l+2) &:& k=5+7l,\\
    (l,l,l,l+1,l+1,l+2,l+2) &:& k=6+7l,\\
    (l,l,l,l+1,l+2,l+2,l+2) &:& k=7+7l,\\
    (l+1,l,l,l+2,l+2,l+1,l+2) &:& k=8+7l,\\
    (l+1,l,l,l+2,l+2,l+2,l+2) &:& k=9+7l,\\
    (l+1,l+1,l,l+2,l+2,l+2,l+2) &:& k=10+7l\\
    \end{array}\right.
  $$  
  if $k\le 26$ or 
  $$
  a_{000}=a_{001}=a_{110}=a_{111}=0, a_{100}=\left\lfloor\frac{k}{4}\right\rfloor, a_{010}=\left\lfloor\frac{k+1}{4}\right\rfloor, 
  a_{101}=\left\lfloor\frac{k+2}{4}\right\rfloor,\text{ and }a_{011}=\left\lfloor\frac{k+3}{4}\right\rfloor
  $$
  if $k\not\equiv 0\pmod 4$ and $k\ge 27$ or 
  $$
  a_{000}=a_{001}=a_{110}=a_{111}=0, a_{100}=\frac{k}{4}, a_{010}=\frac{k}{4}-1, 
  a_{101}=\frac{k}{4}+1,\text{ and }a_{011}=\frac{k}{4}
  $$
  if $k\equiv 0\pmod 4$ and $k\ge 27$.
\end{conjecture}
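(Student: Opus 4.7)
The plan is to treat Proposition~\ref{prop_a_formula_t_3} as expressing $M(C)$ as a polynomial $P(\mathbf{a})$ of total degree four in the seven variables $a_\tau$ with $\tau\in\mathbb{F}_2^3\setminus\{\mathbf{0}\}$ (we may set $a_{\mathbf{0}}=0$ by Lemma~\ref{lemma_zero_sum} and the reduction in the paragraph following it) and to maximize $P$ over nonnegative integers on the simplex $\sum_\tau a_\tau=k$. The lower bound $M_2(k+3,k)\ge P(\mathbf{a}_{\mathrm{conj}}(k))$ is routine: one substitutes the claimed $\mathbf{a}$ from each of the seven residue classes $k\bmod 7$ (for $k\le 26$) or the two cases for $k\ge 27$ into $P$ and collects the answer in closed form.

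For the matching upper bound in the small range $4\le k\le 26$, I would simply enumerate: the number of weak compositions of $k$ into $7$ parts is at most $\binom{32}{6}<10^{6}$, so a direct computer search using Theorem~\ref{thm_M_C_a_formula} is tractable and certifies the table case by case.

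For $k\ge 27$ the argument must be analytic. Note that $P$ is invariant under the $S_3$-action on $\mathbb{F}_2^3\setminus\{\mathbf{0}\}$ induced by permuting coordinates, but \emph{not} under the full group $\mathrm{GL}_3(\mathbb{F}_2)$, so the seven projective bases of $\mathbb{F}_2^3$ split into three $S_3$-orbits with weight profiles $(1,1,1,3)$, $(1,2,2,3)$, and $(1,1,2,2)$. I would first attack the continuous relaxation: using a pairwise averaging step in the spirit of Lemma~\ref{lemma_symmetric_function_optimization}, show that any continuous optimum is supported on exactly four vectors forming a projective basis, then compare the three orbit representatives and show that $\{e_1,e_2,e_1+e_3,e_2+e_3\}$ wins. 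A discrete-concavity and exchange step then yields the integer rounding $\bigl(\lfloor k/4\rfloor,\lfloor (k+1)/4\rfloor,\lfloor (k+2)/4\rfloor,\lfloor (k+3)/4\rfloor\bigr)$ of the conjecture, with the unique tie-breaking in the $k\equiv 0\pmod 4$ case determined by the asymmetry of $P$ among the four coordinates.

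The main obstacle, which I expect to absorb the bulk of the work, is the orbit comparison for $k\ge 27$. Although each of the three orbit types yields the same leading quartic coefficient $(k/4)^4$ via its single internal projective basis, the degree-$3$ contributions differ across orbits because the number of internal minimal-generating $3$-subsets (among the $19$ listed in Proposition~\ref{prop_a_formula_t_3}) depends on the weight profile of the basis. Ruling out not only these competing single-basis configurations, but also the multi-orbit ``spread'' distributions that are optimal for small $k$ (such as the support $\{110,101,011,111\}$ that gives the conjectured answer for $k=4$), will require a delicate comparison of lower-order terms and a sharp determination of the transition point between $k=26$ and $k=27$.
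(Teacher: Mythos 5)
There is an important mismatch of expectations here: the statement you are addressing is presented in the paper only as a \emph{conjecture}. The authors give no proof; they state explicitly that they have checked it computationally for all $k\le 150$, and they separately leave open (as Conjecture~\ref{conj_leading_term}) even the leading-order optimization that your plan relies on. So there is no paper proof to compare against, and a complete argument would be a genuine advance over the paper. Judged on its own terms, your proposal is a roadmap rather than a proof. The parts that are actually carried out are the easy ones: the lower bound by substitution, and the finite enumeration over weak compositions for $4\le k\le 26$, which is legitimate and is essentially what the authors did (for a much larger range). Everything that makes the statement hard is deferred.

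Concretely, the gaps for $k\ge 27$ are these. First, the averaging step of Lemma~\ref{lemma_symmetric_function_optimization} requires the objective to be symmetric in the variables being averaged; the polynomial $P$ from Proposition~\ref{prop_a_formula_t_3} is only $S_3$-symmetric, so averaging $a_\tau$ and $a_{\tau'}$ for $\tau,\tau'$ of different Hamming weights does not preserve $P$ and the lemma gives you nothing. You would need a new localization argument to show the continuous optimum is supported on a projective basis, and this is precisely the content of the open Conjecture~\ref{conj_leading_term} restricted to $t=3$ --- your plan silently assumes a resolution of it. Second, since every support that is a frame yields the same quartic leading term $(k/4)^4$, the decision among the three $S_3$-orbits, and between frame supports and the non-frame supports that win for small $k$ (e.g.\ $\{110,101,011,111\}$ at $k=4$, which contains the line $\{110,101,011\}$ and hence is not a frame), lives entirely in the cubic and lower-order terms; you state that this comparison ``will require a delicate comparison of lower-order terms'' but do not perform it. Third, the exact transition point between $k=26$ and $k=27$ and the asymmetric integer rounding in the $k\equiv 0\pmod 4$ case are exactly the kind of boundary phenomena that cannot be waved through with ``discrete concavity and exchange''; they must be computed. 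As written, the proposal correctly diagnoses the obstacles but does not overcome any of them, so it does not establish the statement.
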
  
We have computationally checked Conjecture~\ref{conj_m_2_t_3} for all $k\le 150$.  
For the leading term of $M_2(k+3,k)$, in terms of $k$, the situation is different to the one of 
Lemma~\ref{lemma_symmetric_function_optimization}, i.e., choosing $a_{000}=0$ and $a_\tau=\tfrac{k}{7}$ for $\tau\in\mathbb{F}_2^3\backslash\{\mathbf{0}\}$ just gives 
$M_2(k+3,k)\ge \frac{k^4}{343}+\mathcal{O}\!\left(k^3\right)$, while $a_{000}=a_{110}=a_{101}=a_{011}=0$ and $a_{100}=a_{010}=a_{001}=a_{111}=\tfrac{k}{4}$ gives 
$M_2(k+3,k)\ge \frac{k^4}{256}+\mathcal{O}\!\left(k^3\right)$ (ignoring the rounding to integers, whose effect is in $\mathcal{O}\!\left(k^3\right)$). 
Conjecture~\ref{conj_m_2_t_3} of course implies $M_2(k+3,k)= \frac{k^4}{256}+\mathcal{O}\!\left(k^3\right)$.

Next we focus on the leading term:
\begin{proposition}
  \label{prop_leading_term_a}
  Let $C$ be a linear $[k+t,k]_2$ code and $\mathbf{a}$ its corresponding vector counting the multiplicities of the occurring information vectors.
  If $t\ge 2$, then 
  $$
    M(C)=\mathcal{O}\!\left(k^t\right)+\frac{1}{(t+1)!}\cdot \sum_{\tau_1\in\mathcal{T}_1}\sum_{\tau_2\in\mathcal{T}_2}\dots \sum_{\tau_t\in\mathcal{T}_t} \left(\prod_{i=1}^t a_{\tau_i}\right) \cdot a_{\left(\sum_{i=1}^t \tau_i\right)}, 
  $$    
  where $\mathcal{T}_i=\mathbb{F}_2^t\backslash \left\langle\left\{\tau_j\,:\, 1\le j<i\right\}\right\rangle$ for $1\le i\le t$.
\end{proposition}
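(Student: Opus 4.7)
The plan is to apply Theorem~\ref{thm_M_C_a_formula} and identify which minimal generating sets contribute to the leading term in $k$. In the formula from that theorem, each $a_\tau$ satisfies $a_\tau\le k$, so the initial terms $k+\sum_{\tau\neq\mathbf{0}}\binom{a_\tau}{2}$ are $\mathcal{O}(k^2)=\mathcal{O}(k^t)$ for $t\ge 2$, and each summand $\prod_{\tau\in\hat{S}}a_\tau$ with $\#\hat{S}\le t$ is $\mathcal{O}(k^t)$. Since there are only boundedly many subsets $\hat{S}\subseteq\mathbb{F}_2^t$ (at most $2^{2^t}$), their total contribution is absorbed into $\mathcal{O}(k^t)$. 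Hence only minimal generating sets of cardinality exactly $t+1$ contribute to the leading term.

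The next step is to characterize a size-$(t+1)$ subset $\hat{S}\subseteq\mathbb{F}_2^t$ as minimal generating if and only if it is a projective basis, meaning $\sum_{x\in\hat{S}}x=\mathbf{0}$ and $\sum_{x\in\hat{T}}x\neq\mathbf{0}$ for every $\emptyset\neq\hat{T}\subsetneq\hat{S}$. The ``if'' direction is immediate from case (b) of Definition~\ref{def_minimal_generating}. For ``only if'', I would introduce the auxiliary $[2t+1,t+1]_2$ code $C^{*}$ with systematic generator matrix $[I_{t+1}\,|\,A]$ whose rows of $A$ are the $t+1$ distinct elements of $\hat{S}$. Setting $S^{*}=\{1,\dots,t+1\}$, Lemma~\ref{lemma_reduced_code} applied to $C^{*}$ shows that $c^{S^{*}}$ is minimal in $C^{*}$ exactly when $\hat{S}$ is minimal generating. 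Corollary~\ref{cor_max_card_S}, applied in $C^{*}$ with $\#S^{*}=t+1$, then forces $c^{S^{*}}_I=\sum_{x\in\hat{S}}x=\mathbf{0}$, which rules out case (a) of Definition~\ref{def_minimal_generating} and leaves precisely the projective-basis condition.

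The final step is the combinatorial count. Given a projective basis $\hat{S}$ of size $t+1$, any $t$-element subset is linearly independent (otherwise we would obtain a proper nonempty subset of $\hat{S}$ summing to zero), and the remaining element equals the sum of the other $t$. Thus ordered tuples $(\tau_1,\dots,\tau_t)$ of linearly independent vectors with $\{\tau_1,\dots,\tau_t,\tau_1+\dots+\tau_t\}=\hat{S}$ are in $(t+1)!$-to-$1$ correspondence with projective bases: pick which of the $t+1$ elements of $\hat{S}$ plays the role of $\sum_{i=1}^{t}\tau_i$ (giving $t+1$ choices) and order the remaining $t$ elements ($t!$ ways). Such ordered linearly independent tuples are, by definition, in bijection with tuples $(\tau_1,\dots,\tau_t)$ with $\tau_i\in\mathcal{T}_i$ for $1\le i\le t$. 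Therefore
$$
\sum_{\hat{S}\text{ proj.\ basis}}\prod_{\tau\in\hat{S}}a_\tau=\frac{1}{(t+1)!}\sum_{\tau_1\in\mathcal{T}_1}\cdots\sum_{\tau_t\in\mathcal{T}_t}\left(\prod_{i=1}^{t}a_{\tau_i}\right)a_{\sum_{i=1}^{t}\tau_i},
$$
which combined with the reduction of the first paragraph yields the claimed formula.

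The main obstacle is making the auxiliary-code reduction rigorous, specifically translating Definition~\ref{def_minimal_generating} into the minimality of $c^{S^{*}}$ via Lemma~\ref{lemma_reduced_code} and then invoking Corollary~\ref{cor_max_card_S} to exclude case (a) of that definition. Once this characterization is established, the counting of projective bases via ordered linearly independent tuples is essentially bookkeeping.
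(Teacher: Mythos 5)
Your proposal is correct and follows essentially the same route as the paper's proof: reduce via Theorem~\ref{thm_M_C_a_formula} to the minimal generating sets of cardinality exactly $t+1$, identify these with projective bases using Corollary~\ref{cor_max_card_S} and Lemma~\ref{lemma_zero_sum_characterization}, and count each such basis $(t+1)!$ times through ordered linearly independent tuples. Your auxiliary-code construction $C^{*}$ only makes explicit the transfer of Corollary~\ref{cor_max_card_S} from codewords to abstract subsets of $\mathbb{F}_2^t$, a step the paper performs implicitly.
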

\begin{proof}
  We apply Theorem~\ref{thm_M_C_a_formula}. If $t\ge 2$ then only the contributions of the minimal generating sets $\hat{S}$ of cardinality exactly $t+1$ are not covered 
  by the $\mathcal{O}\!\left(k^t\right)$ term. Due to Corollary~\ref{cor_max_card_S} we have $\sum_{x\in\hat{S}}x=\mathbf{0}$ in those remaining cases. By Lemma~\ref{lemma_zero_sum_characterization} 
  we have to guarantee that no proper subset $\emptyset\neq\hat{T}\subsetneq \hat{S}$ satisfies $\sum_{x\in\hat{T}} x=\mathbf{0}$. Since there are $(t+1)!$ possible orders of the elements 
  of $\hat{S}$ we obtain the stated summation formula (which mimics the construction or counting of projective bases of $\mathbb{F}_2^t$).     
\end{proof}

We remark that the minimal generating sets of $\mathbb{F}_2^t$ of the maximum cardinality $t+1$ have a lot of equivalent descriptions. As mentioned before, they correspond 
to the projective bases of $\mathbb{F}_2^t$. Due to Corollary~\ref{cor_max_card_S} and Lemma~\ref{lemma_zero_sum_characterization} they also correspond 
to minimal dual codewords (of the $t$-dimensional simplex code). 

\begin{conjecture}
  \label{conj_leading_term}
  Let $t\ge 2$ be an integer and $\mathcal{P}=\left\{e_1,\dots,e_t,\mathbf{1}\right\}$. Then,
  the function $$\frac{1}{(t+1)!}\cdot \sum_{\tau_1\in\mathcal{T}_1}\sum_{\tau_2\in\mathcal{T}_2}\dots \sum_{\tau_t\in\mathcal{T}_t} \left(\prod_{i=1}^t a_{\tau_i}\right) \cdot a_{\left(\sum_{i=1}^t \tau_i\right)},$$ 
  where $\mathcal{T}_i=\mathbb{F}_2^t\backslash \left\langle\left\{\tau_j\,:\, 1\le j<i\right\}\right\rangle$ for $1\le i\le t$, attains its maximum on $\mathbb{R}_{\ge 0}^{2^t-1}$ subject to 
  the constraint $\sum_{\tau\in\mathbb{F}_2^t\backslash\{\mathbf{0}\}} a_\tau=k$ at $a_\tau=\tfrac{k}{t+1}$ for all $\tau\in\mathcal{P}$ and $a_\tau=0$ otherwise. If additionally 
  $a_\tau\in\mathbb{N}$ is assumed, then the maximum is attained at the points where $\left|a_{\tau}-a_{\tau'}\right|\le 1$ for all $\tau,\tau'\in\mathcal{P}$ and $a_\tau=0$ otherwise.     
\end{conjecture}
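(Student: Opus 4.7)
The plan is to reformulate the objective as a generating polynomial for projective bases of $\mathbb{F}_2^t$, verify the Karush-Kuhn-Tucker conditions at the candidate $\mathbf{a}^*=(k/(t+1))\cdot\mathbf{1}_\mathcal{P}$, and then promote this critical point to the global maximum; the integer version would follow by a smoothing step in the spirit of Lemma~\ref{lemma_symmetric_function_optimization}. For the reformulation, note that the constraint $\tau_i\in\mathcal{T}_i=\mathbb{F}_2^t\setminus\langle\tau_1,\dots,\tau_{i-1}\rangle$ forces $(\tau_1,\dots,\tau_t)$ to be an ordered basis; appending $\tau_{t+1}:=\sum_{i=1}^t\tau_i$ yields a projective basis $B=\{\tau_1,\dots,\tau_{t+1}\}$, and each unordered $B$ is traversed exactly $(t+1)!$ times with the same summand, so the objective reduces to
$$
F(\mathbf{a})=\sum_{B\text{ projective basis of }\mathbb{F}_2^t}\,\prod_{\tau\in B}a_\tau,
$$
a multi-affine polynomial invariant under $GL(t,\mathbb{F}_2)$.

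At $\mathbf{a}^*$, when $\tau\in\mathcal{P}$ the only basis $B\ni\tau$ with $B\setminus\{\tau\}\subseteq\mathcal{P}$ is $B=\mathcal{P}$ itself, yielding $\partial F/\partial a_\tau(\mathbf{a}^*)=(k/(t+1))^t$; when $\tau\notin\mathcal{P}$, any such $B$ would force $\tau$ to equal the sum of some $t$ elements of $\mathcal{P}$, but that sum is always an element of $\mathcal{P}$ (since $\mathcal{P}$ itself sums to zero), so $\partial F/\partial a_\tau(\mathbf{a}^*)=0$. Euler's identity for the homogeneous $F$ then fixes the Lagrange multiplier at $\lambda=(t+1)F(\mathbf{a}^*)/k=(k/(t+1))^t$, so the KKT conditions are satisfied.

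The main obstacle is promoting this critical point to a global maximum. My primary strategy is induction on $t$: fix a hyperplane $H\subset\mathbb{F}_2^t$ and split $\mathbf{a}$ into its in-$H$ and out-$H$ parts. Any projective basis meets $H$ in $r\equiv t+1\pmod{2}$ points, with $r\leq t-1$. Splitting $F$ by $r$ and by the intersection patterns, the contribution of bases with a given pattern factors into a ``reduced objective'' inside $H$ (amenable to the inductive bound on $\mathbb{F}_2^{t-1}$) and a product over the $t+1-r$ elements outside $H$ (amenable to AM-GM). The key technical challenge is establishing that the sum of these bounds over all permissible $r$ does not exceed $(k/(t+1))^{t+1}$; this likely requires a careful inner optimization over how the mass $k$ splits between $H$ and its complement. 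As a parallel fallback, I would attempt a direct KKT analysis: the multi-affinity of $F$ forces vanishing diagonal in the Hessian, and combined with the first-order equality $\partial F/\partial a_\tau=\lambda$ on the support of $\mathbf{a}$, one tries to show that every interior critical point lies in the $GL(t,\mathbb{F}_2)$-orbit of $\mathbf{a}^*$ and has value $(k/(t+1))^{t+1}$.

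Finally, for the integer version, once the continuous maximum is pinned to the support $\mathcal{P}$, smoothing in the style of Lemma~\ref{lemma_symmetric_function_optimization} applies: since the only projective basis inside $\mathcal{P}$ is $\mathcal{P}$ itself, $F(\mathbf{a})=\prod_{\rho\in\mathcal{P}}a_\rho$ on this face, and if $a_\tau\geq a_{\tau'}+2$ for some $\tau,\tau'\in\mathcal{P}$, the replacement $(a_\tau,a_{\tau'})\mapsto(a_\tau-1,a_{\tau'}+1)$ strictly increases $F$ (since $(a_\tau-1)(a_{\tau'}+1)-a_\tau a_{\tau'}=a_\tau-a_{\tau'}-1\geq 1$). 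Hence the integer optimum has $|a_\tau-a_{\tau'}|\leq 1$ on $\mathcal{P}$ and vanishes off $\mathcal{P}$.
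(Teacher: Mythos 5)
The statement you are proving is stated in the paper only as a conjecture: the authors give no proof and report merely a computational verification for $t\in\{2,3\}$ and $k\le 100$. Your proposal does not close this gap either. The parts you actually carry out are correct: the identification of the objective with $F(\mathbf{a})=\sum_{B}\prod_{\tau\in B}a_\tau$, summed over projective bases $B$ of $\mathbb{F}_2^t$ (each unordered $B$ is indeed counted $(t+1)!$ times, since $\sum_{x\in B}x=\mathbf{0}$ for every projective basis over $\mathbb{F}_2$), the computation $\partial F/\partial a_\tau(\mathbf{a}^*)=(k/(t+1))^t$ for $\tau\in\mathcal{P}$ and $=0$ otherwise, and the consistency check via Euler's identity. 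But this only establishes that $\mathbf{a}^*$ is a KKT point, which is the easy half; for a non-concave multi-affine polynomial, stationarity says essentially nothing about global optimality. The point where the conjecture actually lives --- showing no other feasible point does better --- is exactly the step you flag as ``the main obstacle'' and leave as two sketched strategies (``the key technical challenge is establishing\dots'', ``one tries to show\dots''). A concrete illustration of why first-order analysis cannot suffice: by the $GL(t,\mathbb{F}_2)$-invariance of $F$, the uniform point $a_\tau=k/(2^t-1)$ for all $\tau\in\mathbb{F}_2^t\setminus\{\mathbf{0}\}$ is also a KKT point, and for $t=3$ it yields $7\cdot(k/7)^4=k^4/343$, strictly less than $(k/4)^4=k^4/256$; so there are multiple stationary points of different values and the KKT conditions alone cannot select the maximum. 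The hyperplane-splitting induction also has a structural difficulty you do not resolve: the $t+1-r$ elements of $B$ outside $H$ are constrained jointly with $B\cap H$ (they must complete it to a projective basis), so the claimed factorization into a reduced objective on $H$ times an AM--GM bound does not hold without further work.

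The integer half inherits the same gap and adds one more: your smoothing argument operates only on the face $\{\mathbf{a}:\supp(\mathbf{a})\subseteq\mathcal{P}\}$, where $F=\prod_{\rho\in\mathcal{P}}a_\rho$, and shows that within that face the balanced point wins. It does not exclude integer points with a different support from exceeding the balanced $\mathcal{P}$-supported value; that would again require the global continuous statement (or a separate rounding/perturbation argument), which is precisely what is missing. In short, the proposal is an honest and partially correct research plan, but the conjecture remains unproven.
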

A direct implication of this conjecture is $M_2(k+t,k)=\left(\frac{k}{t+1}\right)^{t+1}+\mathcal{O}\!\left(k^t\right)$. For $t=2$ or $t=3$, $k\le 100$ Conjecture~\ref{conj_leading_term} 
is indeed true.
  
\section{Exact values for small parameters}

The aim of this subsection is to determine the exact value of $M_2(n,k)$ for cases with $1\le k\le n\le 15$. First note that if a linear code $C$ contains a codeword of weight $1$ 
then removing the corresponding coordinate yields a code $C'$ with $n(C') = n(C)-1$ and $M(C') = M(C)-1$. (In general we have $M(C)=M(C_1)+M(C_2)$ whenever $C=C_1\oplus C_2$, i.e., 
it is sufficient to consider indecomposable codes.) Removing zero or duplicate columns from the generator matrix of a binary code (scalar multiples for $q>2$) does not change 
the number of minimal codewords of the corresponding codes. Thus it is sufficient to consider all projective $[n,k]_2$ codes with minimum distance at least 2. 
These can be generated easily and for each code we can simply count the number of minimal codewords. To this end we have applied the enumeration algorithm from \cite{LC}, see Table~1 for 
the numerical results. In most cases we have verified the lower bounds from \cite{Maxmin2} to be exact and only improved the upper bounds. However, for $n=15$ there are also 
some improvements for the lower bounds. We remark that the rather complicated structure of the formula of $M_2(k+3,k)$ for $k\le 26$ in Conjecture~\ref{conj_m_2_t_3} suggests 
that the exact determination of $M_2(k+t,k)$ might not admit an easy explicit solution when $k$ is \textit{small}. 

\begin{table}[tbp]
\begin{center}
{\small
\begin{tabular}{|c|c|c|c|c|c|c|c|c|c|c|c|c|c|c|c|c|}
\hline
$n/k$ & 1 & 2 & 3 & 4 & 5 & 6 & 7 & 8 & 9 & 10 & 11 & 12 & 13 & 14 & 15\\ \hline
1 & 1 &   &   &  &  &  &  &  &  &  &  &  &  &  & \\ \hline
2 & 1 & 2 &   &  &  &  &  &  &  &  &  &  &  &  & \\ \hline
3 & 1 & 3 & 3 &  &  &  &  &  &  &  &  &  &  &  & \\ \hline
4 & 1 & 3 & 6 & 4 &  &  &  &  & &  &  &  &  &  & \\ \hline
5 & 1 & 3 & 6 & 10 & 5 &  &  &  & &  &  &  &  &  & \\ \hline
6 & 1 & 3 & 7 & 11 & 15 & 6 &  &  & &  &  &  &  &  & \\ \hline
7 & 1 & 3 & 7 & 14 & 17 & 21 & 7 &  & &  &  &  &  &  & \\ \hline
8 & 1 & 3 & 7 & 14 & 22 & 25 & 28 & 8 & &  &  &  &  &  & \\\hline
9 & 1 & 3 & 7 & 15 & 26 & 33 & 36 & 36 & 9 &  &  &  &  &  & \\ \hline
10 & 1 & 3 & 7 & 15 & 30 & 42 & 48 & 48 & 45 & 10 &  &  &  &  & \\ \hline
11 & 1 & 3 & 7 & 15 & 30 & 52 & 66 & 69 & 63 & 55 & 11 &  &  &  & \\ \hline
12 & 1 & 3 & 7 & 15 & 30 & 54 & 90 & 103 & 95 & 82 & 66 & 12 &  &  &\\ \hline
13 & 1 & 3 & 7 & 15 & 31 & 58 & 94 & 151 & 149 & 130 & 102 & 78 & 13 &  &\\ \hline
14 & 1 & 3 & 7 & 15 & 31 & 62 & 106 & 159 & 245 & 217 & 175 & 126 & 91 & 14 & \\ \hline
15 & 1 & 3 & 7 & 15 & 31 & 63 & 110 & 183 & 257 & 385 & 308 & 221 & 155 & 196 & 15 \\ \hline
\end{tabular}
}
\end{center}
\caption{$M_2(n,k)$ for $1\leq n\leq 15, 1\leq k\leq 15$}
\end{table}

\section*{Acknowledgments}
Romar dela Cruz would like to thank the Alexander von Humboldt Foundation for the support through the Georg Forster Research Fellowship, and the Mathematical Institute 
at the University of Bayreuth. Both authors benefit from discussions with Michael Kiermaier and Alfred Wassermann. Especially, the basic idea for the construction in 
Proposition~\ref{prop_projective_base_construction} is due to Michael Kiermaier.

\medskip


\begin{thebibliography}{99}
\bibitem{Agrell} E. Agrell. Voronoi Regions for Binary Linear Block Codes. \textit{IEEE Transactions on Information Theory}, vol. 42, no. 1, pp. 310-316, 1998.

\bibitem{Agrell2} E. Agrell. On the Voronoi neighbor ratio for binary linear codes. \textit{IEEE Transactions on Information Theory}, vol. 44, no. 7, pp. 3064-3072, 1998.

\bibitem{Minmin} A. Alahmadi, R.E.L. Aldred, R. dela Cruz, S. Ok, P. Sol\'e and C. Thomassen. The minimum number of minimal codewords in an $[n,k]$-code and in graphic codes. \textit{Discrete Applied Mathematics}, vol. 184, pp. 32-39, 2015.

\bibitem{Maxmin2} A. Alahmadi, R.E.L. Aldred, R. dela Cruz, P. Sol\'e and C. Thomassen. The maximum number of minimal codewords in an $[n,k]$-code. \textit{Discrete Mathematics}, vol. 313, issue 15, pp. 1569-1574, 2013.

\bibitem{Maxmin} A. Alahmadi, R.E.L. Aldred, R. dela Cruz, P. Sol\'e and C. Thomassen. The maximum number of minimal codewords in long codes. \textit{Discrete Applied Mathematics}, vol. 161, issue 3, pp. 424-429, 2013.

\bibitem{ABN} G. N. Alfarano M. Borello and A. Neri. A geometric characterization of minimal codes and their asymptotic performance. arXiv preprint 1911.11738, 2019.

\bibitem{AB} A. Ashikhmin and A. Barg. Minimal vectors in linear codes. \textit{IEEE Transactions on Information Theory}, vol. 44, no. 5, pp. 2010-2017, 1998.

\bibitem{BM} Y. Borissov and N. Manev. Minimal codewords in linear codes. \textit{Serdica Mathematical Journal}, vol. 30, pp. 303-324, 2004.

\bibitem{CCP} H. Chabanne, G. Cohen and A. Patey. Towards Secure Two-Party Computation from the Wire-Tap Channel. In \textit{Proc. Information Security and Cryptology ICISC 2013}, LNCS, vol. 8565, pp. 34-46.

\bibitem{CKKW} R. dela Cruz, M. Kiermaier, S. Kurz and A. Wassermann. On the minimum number of minimal codewords. arXiv preprint 1912.09357, 1912.09804.

\bibitem{DKL} C. Ding, D. Kohel and S. Ling. Secret-sharing with a class of ternary codes. \textit{Theoretical Computer Science}, vol. 246, issues 1-2, pp. 285-298, 2000.

\bibitem{DY} C. Ding and J. Yuan. Covering and secret sharing with linear codes. In \textit{Proc. 4th Int. Conf. on Discrete Mathematics and Theoretical Computer Science}, Dijon, France, pp. 11-25, 2003.  

\bibitem{DSL} G. Y. Dosa, I. Szalkai and C. Laflamme. The maximum and minimum number of circuits and bases of matroids. \textit{Pure Mathematics and Applications}, vol. 15, no. 4, pp. 383-392, 2004.

\bibitem{ES} R. Entringer and P. Slater. On the maximum number of cycles in a graph. \textit{Ars Combinatoria}, vol. 11, pp. 289-294, 1981.

\bibitem{Hwang} T.-Y. Hwang. Decoding linear block codes for minimizing word error rate. \textit{IEEE Transactions on Information Theory}, vol. IT-25, pp. 733-737, 1979. 

\bibitem{Kashyap} N. Kashyap. On the convex geometry of binary linear codes. preprint. Proceedings of the \textit{Inaugural UC San Diego Workshop on Information Theory and Applications}, 2006.
 
\bibitem{LC} S. Kurz. LinCode - computer classification of linear codes. arXiv preprint 1912.09357, 2019.

\bibitem{kurz2020number} S. Kurz. On the number of minimal codewords in codes generated by the adjacency matrix of a graph. arXiv preprint 2006.02975, 2020.

\bibitem{Massey} J. L. Massey. Minimal codewords and secret sharing. In \textit{Proc. 6th Joint Swedish-Russian Workshop Inf. Theory}, Molle, Sweden, pp. 276-279, 1993.   

\bibitem{SST} J. Schillewaert, L. Storme and J. A. Thas. Minimal codewords in Reed-Muller codes. \textit{Designs, Codes and Cryptography}, vol. 54, issue 3, pp. 273-286, 2010.

\bibitem{TQLZ} C. Tang, Y. Qiu, Q. Liao, and Z. Zhou. Full characterization of minimal linear codes as cutting blocking sets. arXiv preprint 1911.09867, 2019.

\bibitem{YF} K. Yasunaga and T. Fujiwara. Determination of the Local Weight Distribution of Binary Linear Block Codes. \textit{IEEE Transactions on Information Theory}, vol. 52, issue 10, pp. 4444-4454, 2006.

\end{thebibliography}
\end{document}